\documentclass[prx,aps,superscriptaddress,twocolumn,nofootinbib,longbibliography]{revtex4-2}

\usepackage{graphicx,color,amsmath,amsfonts,enumerate,amsthm,amssymb,mathtools,enumitem,thmtools,hyperref,subfigure,mathdots,enumitem,centernot,bm,soul,bbm,tikz,pgfplots,float}
\usepackage[capitalise, noabbrev]{cleveref}
\usepackage[normalem]{ulem}
\usepackage{tcolorbox}
\hypersetup{colorlinks=true,linkcolor=teal,citecolor=teal,urlcolor=teal}
\pgfplotsset{compat=newest}
\tcbset{before skip=10pt,toptitle=2mm,bottomtitle=1mm,fonttitle=\bfseries}
\tcbuselibrary{theorems}
\tcbuselibrary{breakable}

\definecolor{NavyBlue}{rgb}{0.0, 0.0, 0.5}
\definecolor{OliveGreen}{rgb}{0.33, 0.42, 0.18}
\definecolor{def_color_frame}{RGB}{220,230,242}
\colorlet{def_color_back}{def_color_frame!30}
\definecolor{def_color_text}{RGB}{37,64,97}
\definecolor{def_color_frame2}{RGB}{242,200,200}
\colorlet{def_color_back2}{def_color_frame2!30}
\definecolor{def_color_text2}{RGB}{97,55,33}

\newtcolorbox[auto counter]{bluebox}[2][]{%
colback=def_color_back,colframe=def_color_frame,fonttitle=\bfseries,coltitle=def_color_text,float,floatplacement=t,title=Box~\thetcbcounter: #2,#1}

\newtcolorbox[use counter from=bluebox]{redbox}[2][]{%
colback=def_color_back2,colframe=def_color_frame2,fonttitle=\bfseries,coltitle=def_color_text2,float,floatplacement=t,title=Box~\thetcbcounter: #2,#1}

%KK ABBREVIATIONS

\def\E{ {\cal E} }

\def\T{ {\cal T} }

\def \Tr{\text{Tr}}

\def\>{\rangle}
\def\<{\langle}
\newcommand{\bra}[1]{\langle {#1} |}
\newcommand{\ket}[1]{| {#1} \rangle}
 
\newcommand{\ketbra}[2]{\ensuremath{|#1\rangle\!\langle#2|}}

\newcommand{\iden}{\mathbbm{1}}

\renewcommand{\v}[1]{\ensuremath{\boldsymbol #1}}

\definecolor{bluecyan}{rgb}{0.27, 0.66, 0.88}
\definecolor{ppblue}{RGB}{46,117,182}
\definecolor{ppred}{RGB}{197, 90, 17}

%END OF KK ABBREVIATIONS

%START THEOREMS/DEFINITIONS
\theoremstyle{plain}
\newtheorem{thm}{Theorem}

\newtheorem{lem}[thm]{Lemma}

\theoremstyle{definition}

%END THEOREMS/DEFINITIONS

\newcommand{\tanmoy}[1]{ \textcolor{violet}{{\tt TB}: #1}}
\newcommand{\chandan}[1]{ \textcolor{brown}{{\tt CD}: #1}}

\usepackage{newfloat}
\usepackage[super]{nth}
\DeclareFloatingEnvironment[fileext=frm,placement={!ht},name=Box]{myfloat}

\definecolor{beige}{rgb}{1.00,0.95,0.90}
\usepackage[framemethod=TikZ]{mdframed} 
\newenvironment{frameenv}[1]
    {\begin{myfloat}[]
    \begin{mdframed}[roundcorner=4pt,backgroundcolor=blue!5]
    \caption{\bfseries #1}
    }
    { \vspace{0.3cm}
    \end{mdframed}\end{myfloat}
    }
% ------------------------------------------------
% TITLE AND ABSTRACT
% ------------------------------------------------

\begin{document}

\title{Optimal performance of a three stroke heat engine in the microscopic regime}

\author{Tanmoy Biswas}
\affiliation{Theoretical Division (T4), Los Alamos National Laboratory, Los Alamos, New Mexico 87545, USA.}
\affiliation{International Centre for Theory of Quantum Technologies, University of Gdansk, Wita Stwosza 63, 80-308 Gdansk, Poland.}
%\orcid{0000-0001-7232-976X}
\author{Chandan Datta}
\affiliation{Institute for Theoretical Physics III, Heinrich Heine University D\"{u}sseldorf, Universit\"{a}tsstra{\ss}e 1, D-40225 D\"{u}sseldorf, Germany}
\affiliation{Department of Physics, Indian Institute of Technology Jodhpur, Jodhpur 342030, India}
%\orcid{0000-0002-5319-0835}

\begin{abstract}

We consider a three-stroke engine in the microscopic regime, where the working body of the engine is composed of a two-level system. The working body of the engine aims to withdraw heat from the hot heat bath, generate work, and discharge the surplus heat into the cold heat bath through the successive execution of three strokes. In this process, the interaction of the working body with the heat baths is assumed to be energy-conserving and thus can be described by thermal operations. While earlier studies analyzed the optimal performance of this engine when the working body could be transformed by any arbitrary thermal operation, we present closed expressions for the maximum work produced by the engine and the maximum efficiency of the engine when only a restricted class of thermal operations can be implemented on the working body. Furthermore, we explore the engine's optimal performance under two well-studied classes of restrictions: thermal operations realized via Jaynes-Cummings interaction and thermal operations realizable with finite-sized heat baths. Therefore, on one hand, our results are general, as they reproduce the optimal performance achieved when any arbitrary thermal operation can be implemented on the working body once the restriction is relaxed. On the other hand, our results allow us to determine the engine's maximum work production and efficiency in a more realistic scenario, where only a restricted class of thermal operations are possible, thereby bringing our findings closer to experimental feasibility. 
\end{abstract}
\maketitle

\section{Introduction}

Within the domain of thermodynamics, heat engines play a pivotal role by converting thermal energy into work. The study of heat engines not only propels technological advancements in energy conversion but also deepens our understanding of nature's fundamental physical laws. However, the advent of quantum mechanics has ushered in a new era of insight, particularly concerning the foundational underpinnings of thermodynamics on a microscopic scale. This captivating avenue of inquiry traces its origin back to the 1950s, when investigations into the thermodynamic properties of lasers commenced \cite{Ramsey22, Scovil1959, Scovil2}. Since then, a resounding interest has flourished in comprehending the mechanisms of quantum thermal engines as well as quantum thermal machines in general \cite{Kosloff1984, Kosloff2014, Scully2002, PopescuRefrigerator, Allahverdyan2004, Segal, Mahler, PopescuSmallestHeatEngine2010, BrunnerVirtualqubit2012, Alicki2004, Zhang2022, Myers2022, Mukherjee_2021,Bhattacharjee2021, BiswasFDR2022, BiswasLobejko, Biswas2022extractionof, BiswasNew, SahaBhattacharya}, and in formulating microscopic thermodynamic frameworks to characterize their operations \cite{Pusz1978, Alicki_1979, Davies_1978, Esposito2009, horodecki2013fundamental, Skrzypczyk2014}. Recently, researchers have realized microscopic heat engines in the lab using a trapped single calcium ion as a working body \cite{Blickle2011, Abah_2012, Rossnagel2016, Lindenfels2019, Maslennikov_2019}, along with superconducting circuits \cite{NYHSG}, nitrogen vacancy centers in diamonds \cite{Klaztow2019}, atomic \cite{Bouton_2021} and phononic systems \cite{Zanin_2022}, and single-electron transistors \cite{Koski_2014, Koski_2015}. 

The microscopic heat engines can be described as thermodynamic devices that involve small-sized components. In particular, the microscopic heat engines described in the literature involve working bodies that are finite-dimensional and can be as simple as two-level systems \cite{Alicki_1979,Myers2022}. Furthermore, the analysis of the microscopic heat engines has been extended further by considering heat baths of finite size \cite{Lobejko2021, Tajima_Hayashi_PRE, Ito_Hayashi_PRE, Pozas-Kerstjens_2018, Mohammady_PRE}. Based on their way of functioning, the heat engines considered in the microscopic regime can be broadly classified into two classes: continuous engines \cite{PopescuSmallestHeatEngine2010, PopescuRefrigerator, MitchisonContemp, Nimmrichter2017, Kosloff2014, Bera2021} and stroke-based engines \cite{Niedenzu2019, Ptas22, Zhang2014_opto, UzdinLevyKosloff2015}. In a continuous engine, the working body is always in contact with the hot and cold heat baths. As a consequence, the engine withdraws heat from the hot heat bath and converts it into work in a continuous and simultaneous manner. On the other hand, in a stroke-based heat engine, the working body exchanges heat and converts it into work in discrete strokes that are well separated in time. Therefore, in stroke-based engines, the working body repeatedly couples between the hot and  cold baths using subsequent \emph{strokes}. 

%The complete characterization of a continuous heat engine necessitates the consideration of various models of heat baths and couplings. These models are typically formalized through master equations, which, in turn, depend on multiple parameters such as coupling constants, bath spectral densities, etc \cite{Hofer_2017, Newman_Mintert}. Consequently, optimizing the work extraction and efficiency of a continuous heat engine proves to be a formidable task. Despite the inherent complexity, a more accessible understanding of the basic operations of these continuous heat engines can be attained by considering stroke-based heat engines. Importantly, this approach doesn't require detailed knowledge of the dynamics. By embracing this perspective, one can acquire significant knowledge about the fundamental limitations, optimal work, and efficiency of continuous thermal machines.

In this paper, we examine a three stroke heat engine that operates in the microscopic regime introduced in Ref. \cite{Lobejko2020}.  By microscopic, we imply that the working body of this engine is a two-level system. The thermodynamic cycle of the engine is composed of three strokes. In the first stroke, the working body aims to withdraw heat from the hot heat bath, followed by the second and third strokes, where the working body converts the heat into work, disposes of the extra amount of heat in the cold heat bath, and completes the cycle. We assume that the interaction between the working body and heat baths is energy-conserving to prevent any unaccounted energy sources from contributing to the work done by the engine. In other words, the engine produces work by harnessing the temperature difference between the two heat baths. This led us to describe the transformation of the state of the working body via a \emph{thermal operation} \cite{Janzing2000, horodecki2013fundamental, Lostaglio_2019}. The extraction of work is depicted by a unitary process in which the engine's working body converts a fraction of the heat extracted from the hot heat bath into work, while the remaining heat is transferred to the cold heat bath.

The three-stroke engine discussed in Ref. \cite{Lobejko2020} presents  optimal expressions of the work produced by the engine and its efficiency, assuming that the working body can undergo any arbitrary thermal operation during the first and third strokes. The experimental implementation of any arbitrary thermal operation poses a formidable challenge since the realization of any arbitrary thermal operation assumes infinitely large sized baths with exponentially large amount of degeneracy in the spectrum. Therefore, in a practical scenario, only a restricted class of thermal operations can be realized. This restriction depends on the experimental setup used to realize the three stroke engine. Therefore, achieving the maximum work produced by the engine and the maximum efficiency, as demonstrated in Ref. \cite{Lobejko2020}, is very challenging in a realistic scenario. 

 In this paper, we provide expressions for the maximum work produced by the three-stroke engine and maximum efficiency of it when the engine's working body undergoes a restricted class of thermal operations. On one hand, our results reproduce the maximum amount of extracted work and the efficiency obtained in Ref. \cite{Lobejko2020} when the restrictions are relaxed. On the other hand, our results allow us to determine the maximum work and efficiency when the thermal operations acting on the working body are restricted to those that can be realized via the Jaynes-Cummings interaction between a two-level working body and a heat bath modeled with a resonant single-mode harmonic oscillator. In addition, we extend our study to a scenario where the restriction on the thermal operations acting on the two-level working body arises due to the presence of a finite-dimensional heat bath modeled with a truncated single-mode harmonic oscillator, rather than an infinitely large heat bath, which is typically required to implement any arbitrary thermal operation.  

The paper is organized as follows. In Sec. \ref{Description}, we describe the construction of the stroke-based heat engine. We outline the individual strokes that constitute the thermodynamic cycle for this specific three-stroke heat engine. The mathematical framework that describes the characteristics of the strokes is also presented. Subsequently, we establish the thermodynamic framework to precisely describe the process of extracting work and exchanging heat in this stroke-based heat engine. In Sec. \ref{Main_Results}, we calculate the maximum work production and efficiency. The results are obtained by considering a general range of thermal operations that describe the transformation of the working body when it comes into contact with the heat bath. Using the general expressions of optimal work production and efficiency, we examine the work production and efficiency for specific instances, such as when the working body's interaction with heat baths is modelled by the Jaynes-Cummings interaction and the heat bath's dimension is finite (modelled by a truncated harmonic oscillator). Finally, we conclude in Sec. \ref{conclusion}.

\section{Description of a stroke-based discrete engine}\label{Description}
 
\subsection{Construction of the engine } 
Any stroke-based heat engine operates in a sequence of strokes that allows it to consume heat from a hot heat bath and convert it into work in subsequent strokes. The second law of thermodynamics asserts that it is impossible to convert all the heat that has been drawn from the hot heat bath to work. Thus, in order to work in a cyclic manner, the engine needs to release a fraction of the consumed heat into the cold heat bath after work extraction in the final stroke. Here, we consider a model of a stroke-based engine in the microscopic regime that requires three subsequent strokes to complete the thermodynamic cycle. The heat engine consists of the following key components:
\begin{enumerate}
    \item A hot heat bath is thermalized at an inverse temperature $\beta_H$, which serves as a source of heat that requires to produce work. We label this component by $H$.
    \item A cold heat bath is thermalized at inverse temperature $\beta_C$ and works as a heat sink for dumping the remainder of the consumed heat from the hot heat bath after work has been extracted. We denote this component by $C$.
    \item A working substance or working body represented by a $d$-level quantum system that controls the flow of energy among other components of the engine. We label this component by $S$.
\end{enumerate}
Therefore, the total free Hamiltonian of the whole system is
\begin{equation}
    H_{\text{free}} = H_S+H_H+H_C,
\end{equation}
where $H_S$, $H_H$, and $H_C$ denote the Hamiltonian of the working body, hot heat bath, and cold heat bath, respectively. In the start, we assume all the components of the engine are uncorrelated, which allows us to write the full initial state $\rho_{\text{in}}$ of the engine as 
\begin{equation}
    \rho_{\text{in}} = \rho_S\otimes \gamma_{H}\otimes \gamma_{C},
\end{equation}
 where $\gamma_H$ and $\gamma_C$ denote the thermal states of the hot and cold heat baths, i.e.,
\begin{equation}\label{Gibbs_defn}
    \gamma_H = \frac{e^{-\beta_H H_H}}{\Tr(e^{-\beta_H H_H})};\quad \gamma_C = \frac{e^{-\beta_C H_C}}{\Tr(e^{-\beta_C H_C})}.
\end{equation}
Whereas, $\rho_S$ represent the initial state of the working body. In the following section, we shall discuss the functioning of the engine.

\subsection{Characterization of the strokes of the engine}
\label{Characterization}
This section focuses on the various strokes that allow the engine to operate. The engine operates on three major strokes: i) heat stroke, ii) work stroke, and iii) cold stroke. While the engine involves energy-conserving stroke operations, this does not result in a fully autonomous engine since it needs external control to carry out the various steps. Nevertheless, the engine is an advancement towards creating an autonomous machine as it preserves total energy. To put it simply, the process of enabling or disabling interactions within the system does not impact the flow of energy into or out of the system. Following this, we describe the workings of the engine in more detail. 

\subsubsection{Heat stroke}

This is the initial stroke of the engine, in which the working body interacts with the hot heat bath at thermal equilibrium using energy-conserving unitary $U_{S,H}$, i.e, $[U_{S,H},H_{S}+H_{H}] = 0$. Thus, the transformation of the state $\rho_S$ is governed by a thermal operation $\E_{H}$ \cite{horodecki2013fundamental}, i.e.,
\begin{equation}
    \rho_S\rightarrow \E_{H}(\rho_S) := \Tr_{H}\big(U_{S,H}(\rho_S\otimes \gamma_{H})U_{S,H}^{\dagger}\big). 
\end{equation}
%such that $[U,H_S+H_H]=0$. 
The aim of this stroke is to consume heat from the hot heat bath, which increases average energy and induces non-passivity in the working body. In accordance with classical thermodynamics, where heat is defined as the amount of energy transferred to the system from the hot heat bath, we define the amount of heat transferred, denoted as $Q_H$, as the change in the system's average energy caused by its interaction with the hot heat bath, which can be expressed as:
\begin{equation}\label{Defn:HeatExchanged}
Q_H= \Tr\Big(H_S\big(\mathcal{E}_{H}(\rho_S)-\rho_S\big)\Big).
\end{equation}
Here, $\mathcal{E}_{H}(\rho_S)$ represents the final state of the working body after the interaction with the hot heat bath.

\subsubsection{Work stroke}\label{work_stroke}

In the second stroke of the engine, the working body, which resulted from the first stroke, aims to produce work using the heat it drew from the first stroke.  The engine extracts work using a \emph{cyclic}  unitary operation that acts on the working body's state from the first stroke, i.e., $\E_{H}(\rho_S)$, and transforms it to the one with lower average energy \cite{Allahverdyan2004}. A unitary operation is called cyclic if it is generated by a time-dependent Hamiltonian $G(t)$ such that $G(0)=G(\tau)=H_S$, where $\tau$ denotes the duration of the work stroke and $H_S$ is the Hamiltonian of the working body. We denote the set of such cyclic unitary operations by $\mathcal{V}_{\text{cyc}}$. \\
In the following, we emphasize an important distinction between our engine and the familiar four-stroke engine operating in Otto cycle (See Fig. 1 (b) of Ref. \cite{cangemni_Levy_engines}). In a four-stroke engine operating in an Otto cycle, the thermodynamic cycle consists of two work strokes: the expansion stroke and the compression stroke. After the first work stroke, the free Hamiltonian of the two-level working body changes, and after the second work stroke, it returns to its initial value. In contrast, the proposed three-stroke engine operates with a thermodynamic cycle that includes only one work stroke that extracts work by switching on an interaction that alters the external Hamiltonian of the engine's working body. The interaction is switched off upon completion of the work stroke. Thus, the Hamiltonian of the working body is reduced to the free Hamiltonian at the end of the work stroke, which is the same as the Hamiltonian at the beginning of the work stroke. In other words, the external driving parameter undergoes cyclic changes, meaning it reverts back to its original value at the end of the work stroke \cite{Allahverdyan2004, Alicki_1979}.

Therefore, this stroke transforms the working body's state as follows: 
\begin{equation}
    \mathcal{E}_{H}(\rho_S)\rightarrow V (\mathcal{E}_{H}(\rho_S)) V^{\dagger},
\end{equation}
where $V$ is a cyclic unitary operation. The amount of work produced by the engine is defined as \cite{Biswas2022extractionof, Lobejko2020}
\begin{equation}\label{Exp_Work}
    W = \Tr\Big(H_S\big(\mathcal{E}_{H}(\rho_S)- V (\mathcal{E}_{H}(\rho_S)) V^{\dagger}\big)\Big).
\end{equation}
 To maximize the work $W$ over all cyclic unitary operations, we describe the concept of \emph{ergotropy} introduced in the seminal Ref. \cite{Allahverdyan2004} by Allahverdyan et al. Consider any arbitrary $d$-dimensional state $\sigma$ 
having eigendecomposition
\begin{equation}
    \sigma = \sum_{i=0}^{d-1}\lambda_i\ketbra{\lambda_i}{\lambda_i}\quad \text{such that}\quad \forall{i},\;\; \lambda_{i}\geq \lambda_{i+1},
\end{equation}
associated with Hamiltonian $\mathcal{H}$ with eigendecomposition
\begin{equation}
    \mathcal{H} = \sum_{j=0}^{d-1}\varepsilon_j\ketbra{\varepsilon_j}{\varepsilon_j} \quad \text{such that}\quad \forall{j},\;\; \varepsilon_{j}\leq \varepsilon_{j+1}.
\end{equation}
Then the ergotropy of the state $\sigma$ defined as follows:
\begin{eqnarray}
    R(\sigma)&:=&\max_{\mathcal{U}\in \mathcal{V}_{\text{cyc}}}\Tr\left(\mathcal{H}\left(\sigma-\mathcal{U}\sigma \mathcal{U}^{\dagger}\right)\right) \nonumber\\&=& \Tr\left(\mathcal{H}\left(\sigma-\mathcal{U}_{\text{max}}\sigma \mathcal{U}_{\text{max}}^{\dagger}\right)\right),
\end{eqnarray}
where $\mathcal{U}_{\text{max}}$ has the following form:
\begin{equation}
    \mathcal{U}_{\text{max}} = \sum_{j=0}^{d-1}\ketbra{\varepsilon_{j}}{\lambda_{j}}.
\end{equation}
By inspection, we observe that $\mathcal{U}_{\text{max}}$ decreasingly orders the eigenvalues of $\sigma$ such that they correspond to the eigenlevels of Hamiltonian $\mathcal{H}$ arranged in increasing order of energy.

Employing the concept of ergotropy, we can now conclude that when the state of the working body in state $\E_{H}(\rho_S)$ transforms via a cyclic unitary operation $V$, the work production by the engine satisfies the following condition:
\begin{eqnarray}
    W &=& \Tr\Big(H_S\big(\mathcal{E}_{H}(\rho_S)- V (\mathcal{E}_{H}(\rho_S)) V^{\dagger}\big)\Big) \\&\leq& \max_{V\in \mathcal{V}_{\text{cyc}}}\Tr\Big(H_S\big(\mathcal{E}_{H}(\rho_S)- V (\mathcal{E}_{H}(\rho_S)) V^{\dagger}\big)\Big)
    \\&=&  R(\mathcal{E}_{H}(\rho_S)).
\end{eqnarray}
In other words, the maximum amount of the work produced by the engine is simply equals to the ergotropy of the state of the working body after the first stroke.

\subsubsection{Closing the thermodynamic cycle}

In the final stroke of the engine, the state of the working body that results from the work stroke $V\E_H(\rho_S)V^{\dagger}$ interacts with the cold heat bath in thermal equilibrium via an energy conserving unitary operation $U_{S,C}$, i.e., $[U_{S,C},H_S+H_C]=0$. This transformation corresponds to a thermal operation with respect to the cold heat bath denoted by $\E_{C}$, i.e.,
\begin{equation}\label{third_stroke_eqn}
     \mathcal{E}_{C}\big(V\E_H(\rho_S)V^{\dagger}\big) :=  \Tr_{C}\Big(U_{S,C}\big((V\E_{H}(\rho_S)V^{\dagger})\otimes \gamma_{C}\big)U_{S,C}^{\dagger}\Big).
\end{equation}

The goal of this stroke is to return the working body to its initial state, i.e., 
\begin{equation}\label{cyclic_state}
   \mathcal{E}_{C}\big(V\E_H(\rho_S)V^{\dagger}\big)=\rho_S,
\end{equation}
that results in the closing of the thermodynamic cycle. Our aim is to figure out the \emph{cyclic state} $\rho_S$ that satisfies the relation given in Eq. \eqref{cyclic_state}.

We define \emph{cold heat} as the amount of heat that the working body releases into the cold heat bath, expressed as follows:

\begin{equation}\label{cold_heat}
    Q_C = \Tr\Big( H_S\big(\rho_S-V\E_H(\rho_S)V^{\dagger}\big)\Big).
\end{equation}

This step is critical for the entire setup to function as an engine. More generally, this stroke encodes the manifestation of the second law for this heat engine. 

\subsection{The open cycle engine}\label{Open_cycle_engine}

The open cycle engine introduced in Ref. \cite{Biswas2022extractionof} represents a subclass of the discussed three-stroke engine. This kind of engine starts with a working body that is thermalized at the inverse temperature of the cold heat bath. Therefore, the final stroke of the engine simply reduces to thermalization at the inverse temperature of the cold bath, which results in the completion of the cycle. Our research aims to bridge the understanding between an open-cycle engine and a generic three-stroke engine that we have introduced previously, particularly when the working body comprises two-level systems. This connection significantly simplifies the optimization of efficiency and work for the three-stroke engines with a two-level working body. The comparison between these two engines has been illustrated in Fig. \ref{fig:Open_vs_closed}. 

%\begin{figure*}[t]
%    \centering
%    \includegraphics[width=11 cm]{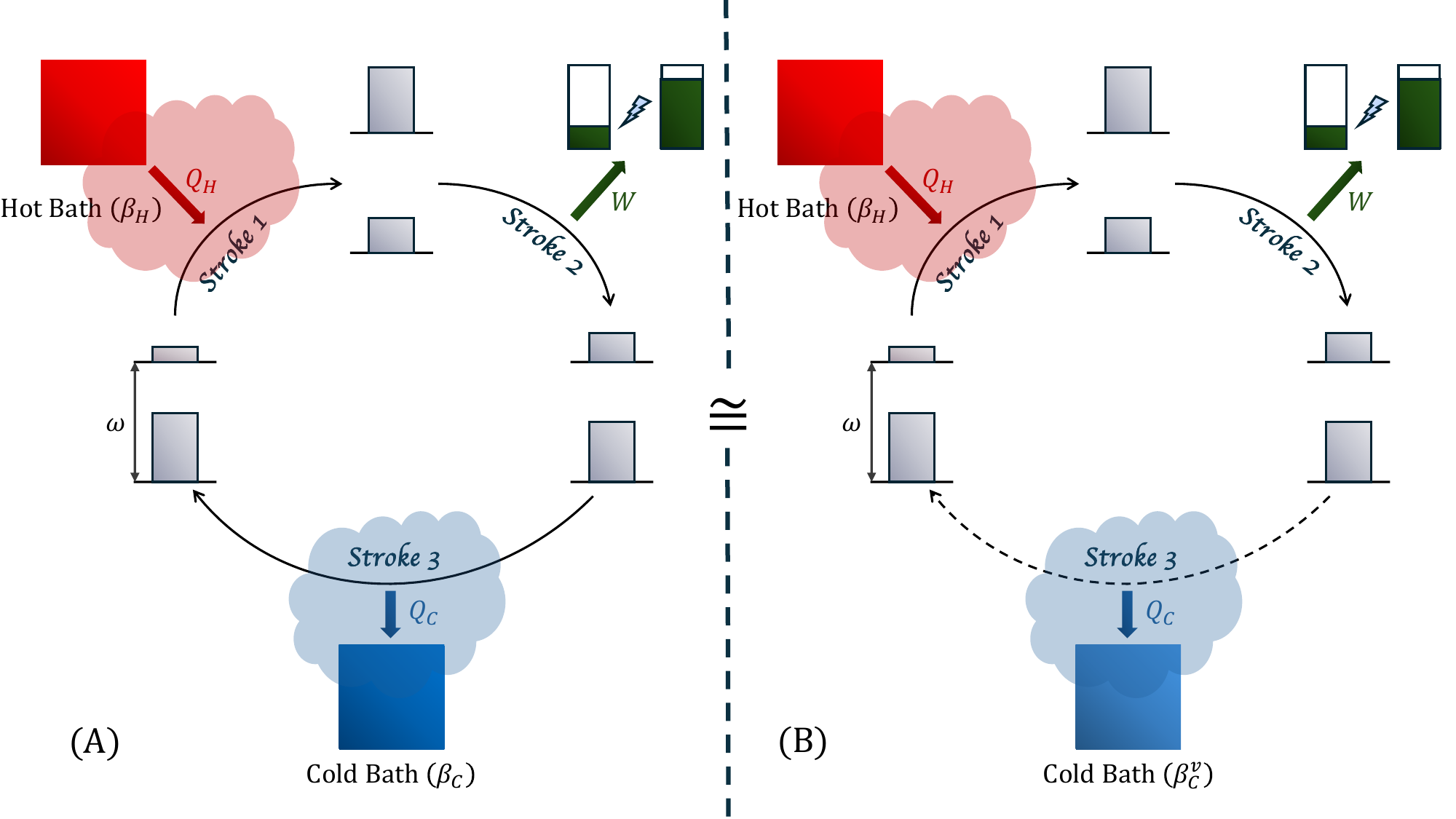}
%    \caption{\label{fig:hyperplane} \textbf{Standardising the bivariate normal distribution}. The bivariate normal distribution with equal probability density points is depicted by a red ellipse centered at the origin, while the constraining hyperplane is represented by a black dashed line. The maximum value of $\epsilon_N$ is determined by the probability mass within the shaded area. To compute this value, we first perform a rotation and scaling transformation that makes the ellipse symmetric with respect to the origin. Then, by utilizing the rotational symmetry of the standard bivariate normal distribution, we can rotate the ellipse such that the hyperplane becomes parallel to the $x_2$ axis.} 
%\end{figure*}
\begin{figure*}[t]
    \centering
    \includegraphics[width=12 cm]{}
    \caption{  The above figure depicts the equivalence between the proposed three-stroke engine and the open-cycle engine introduced in Sec. \ref{Open_cycle_engine}. On the left panel, we show the proposed three-stroke engine, where the first and third strokes aim to withdraw heat from the hot bath and dispose of the excess heat in the cold bath, whereas the production of work happens during the second stroke. The extracted amount of work can be manifested in terms of the charging of the battery (in this work, we have not considered any explicit battery). The first and third strokes are characterised by thermal operations at the hot and cold inverse temperatures $\beta_H$ and $\beta_C$, respectively. This proposed engine is equivalent to an open-cycle engine, where the third stroke is characterized by thermalization at some effective inverse temperature $\beta^v_{C}$, as shown in the right panel. This equivalence captures the underlying idea behind the proof of theorem \ref{thm_optimal_work_eff}. }
    \label{fig:Open_vs_closed}
\end{figure*}

\subsection{Mathematical tools}\label{Mathematical_tools}

This section provides a brief overview to the mathematical tools associated with thermal operations that are important to evaluate the performance of the engine. A thermal operation can be defined as a complete positive trace preserving (CPTP) map  $\E$, which transforms a system's state $\rho$ as 
\begin{equation}
    \mathcal{E}(\rho) = \Tr_{E}\Big(U(\rho\otimes \gamma_E)U^{\dagger}\Big),
\end{equation}
where $U$ satisfies the relation $[U,H_S+H_E]=0$ with $H_S$, $H_E$ are the Hamiltonians of the system and bath, respectively, and $\gamma_E=\frac{e^{-\beta H_E}}{\Tr(e^{-\beta H_E})}$ is the thermal state of the bath that is thermalized at an inverse temperature $\beta$ \cite{horodecki2013fundamental,Lostaglio_2019}.

The central task is then described by the following interconversion problem: what is the set of final states that can be achieved from a given state $\rho$ via thermal operations? This set is convex and referred to as the \emph{thermal cone} of the state $\rho$, denoted by $\T(\rho)$, which follows straightforwardly from the convexity of thermal operation \cite{LostaglioKorzekwaPRX}. Characterizing the necessary and sufficient conditions for the interconversion problem in general is challenging and unknown beyond dimension two \cite{Studzinski,Lostaglio2019}. Nonetheless, the seminal work by Horodecki and Oppenheim in Ref. \cite{horodecki2013fundamental} characterizes the necessary and sufficient conditions for the state interconversion problem when $[\rho,H_S] = 0$. These states are referred to as \emph{energy-incoherent states} as they do not possess coherence in the energy eigenbasis. For a given energy incoherent state $\rho$, the thermal cone $\T(\rho)$ is reduced to a polytope, i.e., the set will have a finite number of extremal points,  as coherence in the energy eigenbasis can not be generated under thermal operations (See theorem 14 of Ref. \cite{Lostaglio_2019}, \cite{LostaglioKorzekwaPRX}). In this case, we refer the set $\T(\rho)$ as the \emph{thermal polytope}. The thermal operation that transforms the initial energy incoherent state $\rho$ to an extremal point of the thermal polytope $\T(\rho)$ is called an \emph{extremal thermal operation}. Later, we shall see that these extremal points of the set $\T(\rho)$ and the extremal thermal operations play a crucial role in optimizing the performance of the three stroke engine. 

Let us  briefly describe the necessary and sufficient conditions for the interconversion problem for energy-incoherent states. In order to do so, we associate a probability vector $\v p$, called the \emph{population vector}, with an energy incoherent state $\rho$ such that $p_i=\langle E_i|\rho|E_i\rangle$, where $|E_i\rangle$ is an eigenstate of Hamiltonian $H_S$ with eigenvalue $E_i$. Then, an energy incoherent state $\rho$ can be converted to another energy incoherent state $\sigma$ via a thermal operation $\E$ iff there exists a stochastic matrix $T^{\E}$ such that 
\begin{equation}
    T^{\E}\v p = \v q\quad\quad T^{\E}\v{\gamma} = \v{\gamma},
\end{equation}
where $\v p$, $\v q$ and $\v{\gamma}$ are the population vectors associated with $\rho$, $\sigma$, and $\gamma$, respectively. Here, $\gamma$ is the Gibbs state of the system. We call such a stochastic matrix $T^{\E}$ a \emph{Gibbs-stochastic matrix} or \emph{thermal process} \cite{Lostaglio_2019,Mazurek_2018}. Employing the concept of relative majorization, one can prove that the existence of such a Gibbs-stochastic matrix is equivalent to the thermomajorization condition, which we have described in Appendix \ref{beta_order_section} \cite{Lostaglio_2019, Entropy_sagawa, Marshall1980InequalitiesTO}. The set of all thermal processes for a given dimension is a polytope. Among the extremal thermal processes, we shall be interested in those thermal processes that correspond to the extremal thermal operation. 

Let us illustrate the above-mentioned concepts for a two-level energy incoherent state, which will be crucial for optimizing the performance of the heat engine. For an energy incoherent initial state 
\begin{equation}\label{Eq:defrho}
    \rho = \begin{pmatrix}
        p & 0\\
        0 & 1-p
    \end{pmatrix},
\end{equation}
with Hamiltonian 
\begin{equation}
    H_S = \omega\ketbra{1}{1},
\end{equation}
we associate the population vector 
\begin{equation}
    \v p = \begin{pmatrix}
        p & 1-p
    \end{pmatrix}^T.
\end{equation}
The extremal thermal process in dimension $2$ is given by \cite{horodecki2013fundamental}: 
\begin{equation}
    \quad\Bigg\{\begin{pmatrix}
        1 & 0 \\
        0 & 1
    \end{pmatrix},\begin{pmatrix}
        1-e^{-\beta\omega} & 1\\
        e^{-\beta\omega} & 0 \\
    \end{pmatrix} \Bigg\}.
\end{equation}
Therefore, the thermal polytope for the state $\rho$ in Eq. \eqref{Eq:defrho} is given by 
\begin{equation}\label{convex_hull}
    \T (\rho) = \text{Conv}\Bigg\{\rho, \begin{pmatrix}
        1-pe^{-\beta\omega} & 1\\
        pe^{-\beta\omega} & 0
    \end{pmatrix}\Bigg\},
\end{equation}
where $\text{Conv} \{A\}$ denotes the convex hull of the set $A$. Thus, any arbitrary element of the set $\T(\rho)$ can be expressed as:
\begin{equation}\label{Range_of_lambda}
    \lambda \begin{pmatrix}
        1-pe^{-\beta\omega} & 1\\
        pe^{-\beta\omega} & 0
    \end{pmatrix}+ (1-\lambda) \rho 
\end{equation}
with $\lambda\in[0,1]$.
% Let $L_{\v{\gamma}}(\v{p})$ be a piecewise linear curve in $\mathbb{R}^2$ obtained by joining $\Big(\sum_{i=1}^k q_{\pi(i)}, \sum_{i=1}^k p_{\pi(i)}\Big)$ for $k = \;1, \ldots, n$ where $\pi$ is the permutation acting on the set of $d$ indices such that 
% \begin{eqnarray}
%   \frac{p_{\pi(1)}}{\gamma_{\pi(1)}} \geq \frac{p_{\pi(2)}}{\gamma_{\pi(2)}} \ldots \geq\frac{p_{\pi(d)}}{\gamma_{\pi(d)}}. 
% \end{eqnarray}
% The curve $L_{\v{\gamma}}(\v{p})$ is referred to as the thermomajorization curve of $\v p$ with respect to $\v{\gamma}$, and the ordering $\pi(12\ldots d)$ is referred to as $\beta$-order of $\v p$ with respect to $\v{\gamma}$. We say $\v p$ thermomajorises $\v q$ and denoted as $\v p\succ_{\gamma}\v q$  if and only if the relative Lorenz curve $L_{\v{\gamma}}(\v{p})$  does not lies below the relative Lorenz curve $L_{\v\gamma}(\v q)$. Therefore, the existence of a thermal operation that transforms $\rho$ to $\sigma$ is equivalent to the existence of a Gibbs-stochastic matrix that transforms the population vector $\v p$ associated with $\rho$, to the population vector $\v q$ associated with $\sigma$ and this further reduces to the condition $\v p\succ_{\gamma}\v q$. In the limit $\beta\rightarrow 0$, $\v p\succ_{\gamma}\v q$ boils down to well known majorization condition $\v p\succ\v q$.

Here, we would like to make an important remark. When we define thermal operations, we suppose that our system of interest is in contact with a very large bath that has an effectively infinite volume. In such an ideal scenario, all possible states within the set $\T(\rho)$ can be realized. In a more practical context, various restricted subsets of states emerge within the thermal polytope $\T(\rho)$ as a result of constraints involving the feasibility of thermal operations. In such cases, the range of $\lambda$ in Eq. \eqref{Range_of_lambda} is more generally, confined between 
\begin{eqnarray}\label{Range_of_lambda_gen}
    [0,\lambda^{\text{max}}_{\beta}],  \;\text{where}\; 0\leq\lambda^{\text{max}}_{\beta}\leq 1.
\end{eqnarray}
In this paper, two particular subsets that will be of interest to us are as follows:
\begin{enumerate}
    \item The set of states that can be achieved from a given state of two-level systems through thermal operations in the presence of a finite heat bath, which can be realized with a truncated harmonic oscillator. \cite{MullerPastena2016}.
    \item The set of states which are achievable from a given state of two-level systems via thermal operations that can be realized via rotating wave approximated Jaynes-Cummings interaction \cite{Lostaglio2018elementarythermal}.  
\end{enumerate}

After explaining all of the necessary concepts related to the functioning of the three-stroke engine, we will now proceed by describing the main results of this work.

\section{Main Results}\label{Main_Results}
\subsection{Thermodynamic framework}
In this section, we shall develop the thermodynamic framework of the proposed heat engine. From the description of the three stroke heat engine that we have discussed in the previous section (see Ref. \ref{Description}), we can summarize the thermodynamic cycles of the three stroke engine as 
\begin{equation}\label{Flowchart}
    \rho_S\rightarrow\E_{H}(\rho_S)\rightarrow V\E_{H}(\rho_S)V^{\dagger}\rightarrow\E_{C}\big(V\E_{H}(\rho_S)V^{\dagger}\big)=\rho_S.
\end{equation}
From the flowchart of Eq. \eqref{Flowchart}, we can write the total change in average energy as 
\begin{eqnarray}
    0 &=& E(\rho_S)-E(\rho_S) \nonumber\\
    &=& E(\rho_S)-E(\E_{H}(\rho_S))+E(\E_{H}(\rho_S))-E(V\E_{H}(\rho_S)V^{\dagger})\nonumber\\&&+ E(V\E_{H}(\rho_S)V^{\dagger})-E(\rho_S) \nonumber\\
    &=& -Q_{H}+W-Q_{C},\label{First_Law}
\end{eqnarray}
where $E(\cdot)=\Tr(H_S\;\cdot)$ denotes average energy, and in the last line, we have used Eqs. \eqref{Defn:HeatExchanged}, \eqref{Exp_Work}, and  \eqref{cold_heat}. Equation \eqref{First_Law} establishes the \emph{first law of thermodynamics} as the manifestation of the conservation of energy 
\begin{equation}
    W = Q_H+Q_C,
\end{equation}
which allows us to express the work done by the engine as the net average flow of heat. Then, the efficiency of the engine is defined as 
\begin{equation}
    \eta := \frac{W}{Q_H} = \frac{Q_C+Q_H}{Q_H} = 1+\frac{Q_C}{Q_H}.
\end{equation}
Next, we demonstrate that the efficiency of the engine is upper bounded by the Carnot efficiency, therefore establishing the second law of thermodynamics. The result is stated in the following theorem.
\begin{thm}
    The efficiency of the three-stoke engine is bounded by the Carnot efficiency, 
    \begin{equation}
        \eta \leq 1-\frac{\beta_H}{\beta_C}:=\eta_{
        \textnormal{Carnot}}.
    \end{equation}
\end{thm}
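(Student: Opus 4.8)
The plan is to reduce the claim to the Clausius-type inequality $\beta_H Q_H + \beta_C Q_C \le 0$ and then divide through by $Q_H$. The single nontrivial ingredient is the standard fact that a thermal operation at inverse temperature $\beta$ cannot increase the nonequilibrium free energy $F_\beta(\rho):=\Tr(H_S\rho)-\beta^{-1}S(\rho)$, where $S(\rho):=-\Tr(\rho\log\rho)$ is the von Neumann entropy. I would first record why this holds: a thermal operation $\E$ at inverse temperature $\beta$ is CPTP and fixes the Gibbs state $\gamma$ (since $[U,H_S+H_E]=0$ forces $U(\gamma\otimes\gamma_E)U^\dagger=\gamma\otimes\gamma_E$, whose marginal on $S$ is $\gamma$), so monotonicity of the quantum relative entropy under CPTP maps gives $D\big(\E(\rho)\|\gamma\big)\le D(\rho\|\gamma)$; using the identity $D(\rho\|\gamma)=\beta\big(F_\beta(\rho)-F_\beta(\gamma)\big)$, this rearranges to the per-stroke second law
\begin{equation}\label{eq:perstroke}
  \Tr\!\big(H_S(\E(\rho)-\rho)\big)\ \le\ \beta^{-1}\big(S(\E(\rho))-S(\rho)\big).
\end{equation}

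Next I would apply \eqref{eq:perstroke} to the two dissipative strokes of the cycle \eqref{Flowchart}. For the heat stroke $\rho_S\mapsto\E_H(\rho_S)$ at inverse temperature $\beta_H$, \eqref{eq:perstroke} together with the definition \eqref{Defn:HeatExchanged} of $Q_H$ yields $\beta_H Q_H\le S(\E_H(\rho_S))-S(\rho_S)$. For the cold stroke $V\E_H(\rho_S)V^\dagger\mapsto\E_C\big(V\E_H(\rho_S)V^\dagger\big)=\rho_S$ at inverse temperature $\beta_C$, \eqref{eq:perstroke} together with the definition \eqref{cold_heat} of $Q_C$ and the cyclicity condition \eqref{cyclic_state} yields $\beta_C Q_C\le S(\rho_S)-S(V\E_H(\rho_S)V^\dagger)$. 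Since the work stroke is unitary it leaves the entropy unchanged, $S(V\E_H(\rho_S)V^\dagger)=S(\E_H(\rho_S))$; substituting this into the cold-stroke inequality and adding it to the heat-stroke inequality, the entropy contributions cancel and I obtain $\beta_H Q_H+\beta_C Q_C\le 0$. In the engine regime $Q_H>0$, so dividing by $Q_H$ gives $Q_C/Q_H\le-\beta_H/\beta_C$ and hence $\eta=1+Q_C/Q_H\le 1-\beta_H/\beta_C=\eta_{\textnormal{Carnot}}$, which is the assertion.

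I expect the only genuine obstacle to be the clean justification of the per-stroke inequality \eqref{eq:perstroke}; once it is in place, the rest is bookkeeping resting on the cyclicity \eqref{cyclic_state} and the entropy-preservation of the work stroke. It is worth emphasising the point where $\gamma$ being a fixed point of the thermal operation enters, since that is exactly what converts contractivity of relative entropy into a statement about heat and entropy. For energy-incoherent states one could alternatively derive \eqref{eq:perstroke} from the thermomajorization / Gibbs-stochasticity characterization of Sec.~\ref{Mathematical_tools}, but the relative-entropy route also covers states carrying energy coherence (which may appear after the work stroke), so I would present that one.
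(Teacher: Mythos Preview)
Your proposal is correct and follows essentially the same route as the paper: both apply free-energy monotonicity under thermal operations to the heat and cold strokes, use unitary invariance of entropy for the work stroke together with the cyclicity condition to cancel the entropy terms, arrive at the Clausius inequality $\beta_H Q_H+\beta_C Q_C\le 0$, and then divide by $Q_H$. The only minor difference is that the paper does not take $Q_H>0$ as an assumption of the ``engine regime'' but derives it from $W>0$ via the first law $W=Q_H+Q_C$ combined with the Clausius inequality (since $Q_C\le -(\beta_H/\beta_C)Q_H$ and $Q_H+Q_C>0$ force $(1-\beta_H/\beta_C)Q_H>0$); you may want to add that one line for completeness. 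Your explicit justification of free-energy monotonicity via contractivity of relative entropy and the Gibbs fixed point is a nice addition that the paper merely cites.
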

The proof of the theorem can be found in appendix \ref{Thermo_framework_carnot}. This establishes the thermodynamic framework of the three-stroke heat engine. We shall now optimize the work production and efficiency of the three-stroke engine.

\subsection{Optimal performance of the three stroke engine}

In this section, we characterize the optimal efficiency and work produced by the three-stroke engine.  

\begin{thm}\label{thm_optimal_work_eff}
Consider a three-stroke engine with a two-level system as the working body with Hamiltonian $H_S=\omega|1\rangle\langle 1|$ and the range of $\lambda$ in Eq. \eqref{Range_of_lambda_gen} is restricted between $0$ to $\lambda^{\textnormal{max}}_H$ and $\lambda^{\textnormal{max}}_C$ for the first and third strokes, respectively. Then, the optimal work produced by the engine and the optimal efficiency are given by 
\begin{widetext}
\begin{eqnarray}
p &=& \frac{1-\lambda^{\textnormal{max}}_H(1-\lambda_C^{\textnormal{max}})-\lambda_C^{\textnormal{max}}(1-\lambda^{\textnormal{max}}_H)e^{-\beta_C\omega}}{2-\lambda_C^{\textnormal{max}}(1-\lambda_H^{\textnormal{max}})-\lambda_H^{\textnormal{max}}-\lambda_C^{\textnormal{max}}(1-\lambda_H^{\textnormal{max}})e^{-\beta_C\omega}-\lambda_H^{\textnormal{max}}(1-\lambda_C^{\textnormal{max}})e^{-\beta_H\omega}+\lambda_H^{\textnormal{max}}\lambda_C^{\textnormal{max}}e^{-(\beta_C+\beta_H)\omega}},\label{Opt_performance_p}\\
W_{\textnormal{max}} &=& 1-2\lambda^{\textnormal{max}}_H+\frac{2\Big(\lambda_H^{\textnormal{max}}e^{-\beta_H\omega}-(1-\lambda_H^{\textnormal{max}})\Big)\Big(1-(1-\lambda_C^{\textnormal{max}})\lambda_H^{\textnormal{max}}-\lambda_C^{\textnormal{max}}e^{-\beta_C\omega}(1-\lambda_{H}^{\textnormal{max}})\Big)}{2-\lambda_C^{\textnormal{max}}(1-\lambda_H^{\textnormal{max}})-\lambda_H^{\textnormal{max}}-\lambda_C^{\textnormal{max}}e^{-\beta_C\omega}(1-\lambda_H^{\textnormal{max}})-\lambda_H^{\textnormal{max}}e^{-\beta_H\omega}(1-\lambda_C^{\textnormal{max}})+\lambda_{C}^{\textnormal{max}}\lambda_{H}^{\textnormal{max}}e^{-(\beta_C+\beta_H)\omega}},\nonumber\\\label{Opt_performance_W}\\
\eta_{\textnormal{max}} &=& 1-\frac{\lambda_C^{\textnormal{max}}\Big(1-\lambda_H^{\textnormal{max}}e^{-\beta_H\omega}-(1-\lambda_H^{\textnormal{max}})e^{-\beta_C\omega}\Big)}{\lambda^{\textnormal{max}}_H\Big(e^{-\beta_H\omega}-(1-\lambda^{\textnormal{max}}_C)-\lambda_C^{\textnormal{max}}e^{-(\beta_H+\beta_C)\omega)}\Big)}.
\label{Opt_performance_E}
\end{eqnarray}
\end{widetext}
\end{thm}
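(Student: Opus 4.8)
The plan is to reduce the optimization to the open-cycle engine of Ref.~\cite{Biswas2022extractionof} depicted in Fig.~\ref{fig:Open_vs_closed}, solve the cyclicity constraint for $p$, and then show that exhausting both restrictions---taking $\lambda_H=\lambda_H^{\textnormal{max}}$ and $\lambda_C=\lambda_C^{\textnormal{max}}$---is optimal for both figures of merit simultaneously.

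\textit{Setup.} First I would reduce to an energy-incoherent cyclic state $\rho_S=\mathrm{diag}(p,1-p)$: thermal operations cannot create energy coherence, and choosing $V=\mathcal{U}_{\textnormal{max}}$ in the work stroke both saturates $W\le R(\E_H(\rho_S))$ and hands a diagonal (passive) state to the cold stroke, so $\E_C$ and hence $\rho_S$ stay diagonal. With $H_S=\omega\ketbra{1}{1}$ and $\lambda_H\in[0,\lambda_H^{\textnormal{max}}]$, Eq.~\eqref{Range_of_lambda} gives $\E_H(\rho_S)$ with ground population $q_0=\lambda_H(1-pe^{-\beta_H\omega})+(1-\lambda_H)p$; hence, by Eq.~\eqref{Defn:HeatExchanged} and the definition of ergotropy, $Q_H=\omega\lambda_H\big(p(1+e^{-\beta_H\omega})-1\big)$ and $W=R(\E_H(\rho_S))=\omega(1-2q_0)$ whenever $q_0<\tfrac12$ (and $W=0$ otherwise). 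The work stroke leaves the passive qubit state $\tilde\rho=\mathrm{diag}(1-q_0,q_0)$, which is a Gibbs state at some virtual inverse temperature---the bridge to the open-cycle picture, where the cold stroke is merely thermalization to that temperature.

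\textit{Cyclicity and substitution.} Parametrizing the cold stroke by $\lambda_C\in[0,\lambda_C^{\textnormal{max}}]$, the cyclicity requirement $\E_C(\tilde\rho)=\rho_S$ [Eq.~\eqref{cyclic_state}] reduces to one scalar equation (equality of ground populations) that is linear in $p$ once $\tilde\rho$ is inserted; its solution is exactly the right-hand side of Eq.~\eqref{Opt_performance_p} with $(\lambda_H,\lambda_C)$ in place of $(\lambda_H^{\textnormal{max}},\lambda_C^{\textnormal{max}})$. Feeding this $p$ back into $W=\omega(1-2q_0)$ and into $\eta=W/Q_H$ expresses $W$ and $\eta$ as explicit rational functions of $(\lambda_H,\lambda_C)$ at fixed $\beta_H,\beta_C,\omega$.

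\textit{Optimization---the main obstacle.} It remains to show $W$ and $\eta$ are non-decreasing in each of $\lambda_H,\lambda_C$ throughout the engine regime, so the corner $(\lambda_H^{\textnormal{max}},\lambda_C^{\textnormal{max}})$ is optimal and Eqs.~\eqref{Opt_performance_W}--\eqref{Opt_performance_E} follow by substitution. At fixed $p$, $W$ is linear in $\lambda_H$ with slope proportional to $p(1+e^{-\beta_H\omega})-1>0$ (the condition $Q_H>0$), forcing $\lambda_H=\lambda_H^{\textnormal{max}}$; then $W$ is linear in $p$ with slope proportional to $\lambda_H^{\textnormal{max}}(1+e^{-\beta_H\omega})-1$, which one checks must be positive (else $\E_H(\rho_S)$ cannot be made non-passive, so the engine does no work), so the largest admissible $p$ is wanted; and from the cyclicity equation the $\lambda_C$ needed to realize a given $p$ is monotone in $p$, so that largest $p$ is reached precisely at $\lambda_C=\lambda_C^{\textnormal{max}}$, which one must also check dominates the other active constraints (the non-passivity threshold and positivity of the populations). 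Efficiency is treated identically after rewriting $\eta=2+\dfrac{1-2p}{\lambda_H\big(p(1+e^{-\beta_H\omega})-1\big)}$, which is monotone in $\lambda_H$ and in $p$ in the same directions. The delicate points are the bookkeeping of the several regime conditions ($p>\tfrac12$, $q_0<\tfrac12$, $\lambda_H^{\textnormal{max}}(1+e^{-\beta_H\omega})>1$, $\beta_C>\beta_H$) under which the engine genuinely produces work, and the sign of the $\lambda_C$-derivative, since $\lambda_C$ enters $W$ and $\eta$ only implicitly through the cyclicity value of $p$; organizing the argument through the open-cycle equivalence---replacing ``close the cycle with a restricted cold thermal operation'' by ``thermalize to the virtual temperature fixed by $\lambda_H^{\textnormal{max}}$ and $\lambda_C^{\textnormal{max}}$''---keeps this manageable and, on lifting the restrictions, reproduces the result of Ref.~\cite{Lobejko2020}.
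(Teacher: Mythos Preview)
Your approach is essentially the same as the paper's. Both reduce to an energy-incoherent cyclic state, identify the engine with an open-cycle engine at a virtual cold temperature $\beta_C^v$ determined by $p$, fix the heat stroke to the extremal operation $\lambda_H=\lambda_H^{\textnormal{max}}$ and the work stroke to the swap $X$, show that the resulting $W^*(p)$ and $\eta^*(p)$ are increasing in $p$, and finally use the cyclicity equation to show that the largest admissible $p$ occurs at $\lambda_C=\lambda_C^{\textnormal{max}}$. The paper carries out the last monotonicity step by explicitly computing $\partial p/\partial\lambda_C>0$, whereas you assert it; conversely, you are more explicit than the paper about why the slope $\lambda_H^{\textnormal{max}}(1+e^{-\beta_H\omega})-1$ must be positive in the engine regime. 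The paper also outsources the justification of $\lambda_H=\lambda_H^{\textnormal{max}}$ to Theorem~6 of Ref.~\cite{Biswas2022extractionof}, while you argue it directly from linearity in $\lambda_H$ at fixed $p$; both routes leave the same residual bookkeeping (that increasing $\lambda_H$ to $\lambda_H^{\textnormal{max}}$ at fixed $p$ keeps the required $\lambda_C$ inside $[0,\lambda_C^{\textnormal{max}}]$) at the level of a regime check.
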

The complete proof of this theorem can be found in Appendix \ref{Proof_of_thm} which requires the concept of $\beta$-ordering that we have described in Appendix \ref{beta_order_section} \cite{horodecki2013fundamental,Lostaglio_2019}. Here, we would like to make an important remark. The expression of the optimal efficiency $\eta_{\textnormal{max}}$ in Eq. \eqref{Opt_performance_E} reduces to $1+\frac{\lambda_{C}^{\textnormal{max}}}{\lambda_{H}^{\textnormal{max}}\left(1-\lambda_{C}^{\textnormal{max}}\right)}$
when $\omega\rightarrow\infty$, which leads to  efficiency more than 1. However, in this regime $W_{\text{max}}$  becomes negative which can be checked by taking the limit $\omega\rightarrow\infty$ in Eq. \eqref{Opt_performance_W}. Note that from Lemma \ref{Lemworkheat} in Appendix \ref{Thermo_framework_carnot}, one can see that in order to produce positive amount of work the efficiency of the engine should be in between $0$ and $\eta_{\text{Carnot}}=1-\frac{\beta_H}{\beta_C}$ i.e.,
\begin{equation}\label{Workefficiencyimplication}
    W>0 \Rightarrow 0 <\eta <\eta_{\text{Carnot}}.
\end{equation}
By taking negation of Eq. \eqref{Workefficiencyimplication}, we see the following:
\begin{equation}
    \text{if $\eta \leq 0$ or $\eta \geq \eta_{\text{Carnot}}$  then }\,\, W\leq 0.
\end{equation}
This implies that the engine fails to produce a positive amount of work when the efficiency of the engine is not in between $0$ and $\eta_{\text{Carnot}}$. Therefore, for a given value of $\lambda_H^{\textnormal{max}} \in[0,1]$, $\lambda_C^{\textnormal{max}}\in[0,1]$, $\beta_H$ and $\beta_C$ (with the obvious condition $\beta_C>\beta_H$), we need to suitably choose $\omega$ so that $\eta_{\text{max}}$ lies between $0$ and $\eta_{\text{Carnot}}$ to produce a positive amount of work. 
 
Let us look at a scenario where one can perform any thermal operation at inverse temperatures $\beta_H$ and $\beta_C$ in the first and third strokes, respectively. This implies that one can choose $\lambda^{\text{max}}_{H} = 1$ and $\lambda^{\text{max}}_{C} = 1$. With these values, one can use the general expressions in Eqs. \eqref{Opt_performance_W} and \eqref{Opt_performance_E} to find the expressions for the engine's maximum work production and maximum efficiency. In that scenario, the optimal work produced by the engine and the optimal efficiency from Eqs. \eqref{Opt_performance_W} and \eqref{Opt_performance_E} are reduced to \cite{Lobejko2020}
\begin{eqnarray}
W_{\text{max}} &=& \omega\Big(\frac{2e^{-\beta_H\omega}}{1+e^{-(\beta_H+\beta_C)\omega}}-1\Big)\,,\mbox{and}\nonumber\\
\eta_{\text{max}} &=& 1-\frac{1-e^{-\beta_H\omega}}{e^{-\beta_H\omega}-e^{-(\beta_H+\beta_C)\omega}},
\end{eqnarray}
respectively.  Note that, in order to produce a positive amount of work (i.e., $W_{\text{max}}>0$), the necessary and sufficient condition must be 
\begin{equation}
    2>e^{\beta_H\omega}+e^{-\beta_C\omega}.
\end{equation} 

Therefore, one must follow the protocol in Box \ref{box:one} to achieve the best work production and efficiency of the stroke-based discrete heat engine with the hot and cold heat baths at inverse temperatures $\beta_H$ and $\beta_C$, respectively. This specific model of the three-stroke engine has been extensively studied in Ref. \cite{Lobejko2020}. In contrast, our results, which encompass the maximum work production and maximum efficiency given in Eqs. \eqref{Opt_performance_W} and \eqref{Opt_performance_E}, have broader applicability.

 \begin{frameenv}{Protocol leading to optimal work production and efficiency}
    \label{box:one}
\begin{enumerate}
    \item Take a two level system in the state 
    \begin{equation}\label{cyclic_state2}
        \rho = \begin{pmatrix}
       \frac{1}{1+e^{-(\beta_H+\beta_C)\omega}} & 0\\
       0 & \frac{e^{-(\beta_H+\beta_C)\omega}}{1+e^{-(\beta_H+\beta_C)\omega}}
    \end{pmatrix}
    \end{equation}
\item Apply a thermal operation that corresponds to the Gibbs stochastic matrix 
\begin{equation}
        T^{\E_{H}^*} = \begin{pmatrix}
       1-e^{-\beta_H\omega} & 1\\
       e^{-\beta_H\omega} & 0
    \end{pmatrix}
    \end{equation}
wrt. the hot heat bath. 
\item Apply the permutation 
\begin{equation}
        X = \begin{pmatrix}
       0 & 1\\
       1 & 0
    \end{pmatrix}
    \end{equation}
on the system to extract work from the system and transfer it to the battery.
\item Apply a thermal operation that corresponds to the Gibbs stochastic matrix 
\begin{equation}
        T^{\E_{C}^*} = \begin{pmatrix}
       1-e^{-\beta_C\omega} & 1\\
       e^{-\beta_C\omega} & 0
    \end{pmatrix}
    \end{equation}
wrt. the cold heat bath which closes the cycle. 
\end{enumerate}
\end{frameenv}

From Eq. \eqref{Opt_performance_E}, one can also recover the efficiency of the open cycle engine that we have described in subsection \ref{Open_cycle_engine}. Since, in the open cycle engine, the working body starts with a thermal state at the temperature of the cold bath $\beta_C$,  closing the cycle during the final stroke reduces to re-thermalization at the inverse temperature $\beta_C$. Therefore, by substituting
\begin{equation}\label{Limit_thermal_open_cycle}
    \lambda_{H}^{\text{max}}=1\,\,\textrm{and}\,\, \lambda_{C}^{\text{max}} = \frac{1}{1+e^{-\beta_C\omega}}
\end{equation}
in Eqs. \eqref{Opt_performance_W} and \eqref{Opt_performance_E}, we obtain \cite{Biswas2022extractionof}: 
\begin{eqnarray}
    &&W_{\text{max}} = \omega\Big(\frac{2e^{-\beta_H\omega}}{1+e^{-\beta_C\omega}}-1\Big),\,\,\mbox{and}\\
    &&\eta_{\text{max}} = 1-\frac{1-e^{-\beta_H\omega}}{e^{-\beta_H\omega}-e^{-\beta_C\omega}}.
\end{eqnarray}

Next, we shall focus on characterizing the efficiency of the three-stroke engine when we do not have access to a full set of thermal operations. Namely, we focus on two such scenarios: first, we consider the case where we study the optimal performance of the engine when we restrict ourselves to thermal operations that can be realized via the Jaynes-Cummings interaction, and second, when we restrict ourselves to thermal operations that can be performed when one has access only to a finite-dimensional heat bath that can be modeled by a truncated harmonic oscillator. 

\begin{figure}[t]
    \centering
    \centering
    \includegraphics[width=
    \columnwidth]{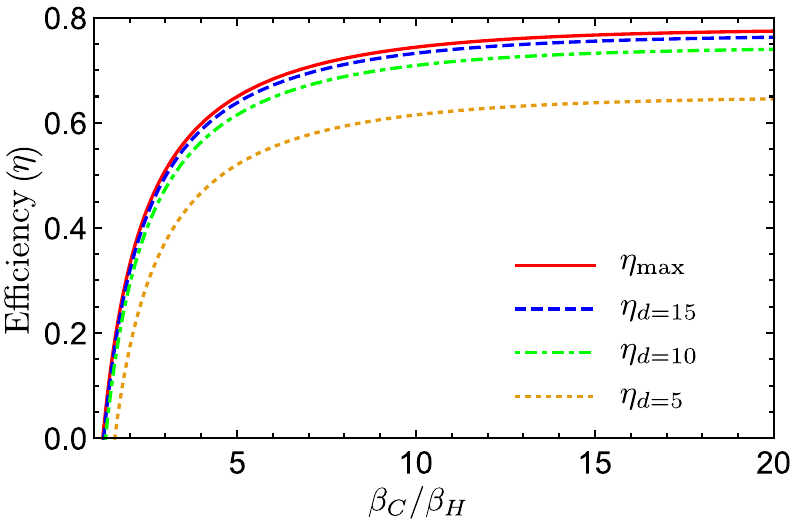}
    \caption{Plot of efficiency ($\eta$) vs $\beta_C/\beta_H$ with $\omega=0.2/\beta_H$. The solid, dashed, dotdashed, and dotted curves represent optimal efficiency, efficiencies for finite heat bath with $d=15$, $d=10$, and $d=5$, respectively.}
    \label{fig:efficiency_finiteheatbath}
\end{figure}

\begin{figure}[t]
    \centering
    \centering
    \includegraphics[width=
    \columnwidth]{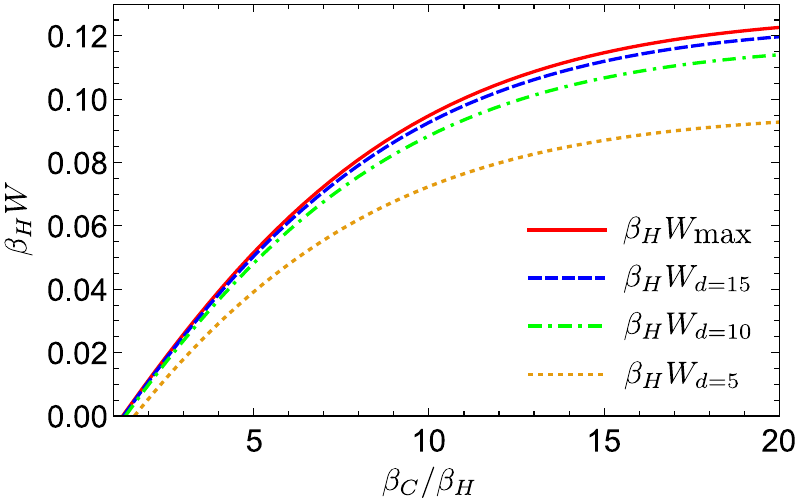}
    \caption{Plot of $\beta_H W$ vs $\beta_C/\beta_H$ with $\omega=0.2/\beta_H$, and $W$ represents the work done by the engine. The solid, dashed, dotdashed, and dotted curves represent optimal work production, work productions for finite heat bath with $d=15$, $d=10$, and $d=5$, respectively.  }
    \label{fig:work_finiteheatbath}
\end{figure}

\subsubsection{Optimal performance of the engine for finite sized heat baths}

In this section, we explore the engine's optimal efficiency and work output when we only have access to finite-dimensional heat baths that can be modeled using a truncated harmonic oscillator. In this scenario, one cannot perform any arbitrary thermal operation in the first and third strokes, which implies $\lambda^{\text{max}}<1$ (See Eq. \eqref{Range_of_lambda_gen}). We need to figure out the value of $\lambda^{\text{max}}$ when we have access only to the finite-dimensional heat baths modeled by a truncated harmonic oscillator. This is addressed in the following theorem.

\begin{thm}\label{Finite_size_bath_range}
    Consider a thermal operation $\mathcal{E}$ on a two-level system given in the state $\rho$ 
    \begin{equation}
        \mathcal{E}(\rho)=\Tr_{E}\Big(U(\rho\otimes \gamma_E)U^{\dagger}\Big),
    \end{equation}
    where $\rho$ is a energy incoherent state, i.e., $[\rho,H_S]=0$, and $[U,H_S+H_E]=0$. Let us model the Hamiltonian of the system as
    \begin{equation}
        H_S = \omega|1\rangle\langle 1|,
    \end{equation}
    which is resonant with the bath Hamiltonian modeled by a truncated harmonic oscillator, i.e.,
    \begin{equation}\label{truncated_bath}
        H_E = \omega \sum_{n=0}^d n|n\rangle\langle n|,
    \end{equation}
    where $d+1$ is the dimension of the heat bath. Then, the maximum value of $\lambda$ in Eq. \eqref{Range_of_lambda} is given by 
    \begin{equation}\label{FSE_range}
        \lambda^{\textnormal{max}} = \frac{1-e^{-\beta\omega d}}{1-e^{-\beta\omega(d+1)}}.
    \end{equation}
\end{thm}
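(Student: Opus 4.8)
The plan is to turn the claim into an elementary optimization over energy-conserving unitaries, using the fact that for an energy-incoherent input only the joint transition probabilities $|U_{ij}|^2$ matter. Write $|i,n\rangle := |i\rangle_S\otimes|n\rangle_E$ and $x:=e^{-\beta\omega}$. First I would record the spectrum of $H_S+H_E$: the composite energies are $0,\omega,2\omega,\dots,(d+1)\omega$, and the eigenspace of energy $k\omega$ is spanned by $\{|0,k\rangle,|1,k-1\rangle\}$ intersected with the states that actually exist, so it is two-dimensional for $k=1,\dots,d$ and one-dimensional for $k=0$ (only $|0,0\rangle$) and for $k=d+1$ (only $|1,d\rangle$). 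Because $[U,H_S+H_E]=0$, the unitary $U$ is block-diagonal with respect to this decomposition: a direct sum of one $2\times2$ unitary on each block $k\in\{1,\dots,d\}$ together with phases on the two extremal, one-dimensional blocks.

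Second, since $\rho$ and $\gamma_E$ are diagonal in the energy eigenbasis the joint input state is diagonal, so the output populations of $S$ are governed by the classical stochastic matrix with entries $|\langle i',n'|U|i,n\rangle|^2$; inside each two-dimensional block this is a $2\times2$ bistochastic matrix determined by a single parameter $t_{k-1}:=|\langle 0,k|U|1,k-1\rangle|^2\in[0,1]$, and the $t_n$ for different $n$ are mutually independent. Taking the marginal on $S$ yields a thermal operation on the qubit, hence a Gibbs-stochastic matrix $\begin{pmatrix}1-bx & b\\ bx & 1-b\end{pmatrix}$ whose unique free parameter $b\in[0,1]$ coincides, by a one-line comparison of population vectors, with the parameter $\lambda$ of Eq.~\eqref{Range_of_lambda}; explicitly $\lambda=b=\sum_{n=0}^{d-1}\langle n|\gamma_E|n\rangle\,t_n$. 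The sum runs only to $n=d-1$ because $|1,d\rangle$ lies alone in the top eigenspace of energy $(d+1)\omega$ and thus cannot de-excite --- this is exactly the feature that forces $\lambda^{\max}<1$ at finite $d$.

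Third, I would maximize: each $t_n$ ranges freely over $[0,1]$ and $b$ is linear in the $t_n$, so the achievable values of $\lambda$ form the interval $[0,b_{\max}]$ with $b_{\max}=\sum_{n=0}^{d-1}\langle n|\gamma_E|n\rangle=\frac{\sum_{n=0}^{d-1}x^n}{\sum_{n=0}^{d}x^n}=\frac{1-x^d}{1-x^{d+1}}$ by the geometric-series identity. The maximum is attained by the truncated Jaynes--Cummings ``full swap'' $|1,n\rangle\leftrightarrow|0,n+1\rangle$ for $n=0,\dots,d-1$, which is a genuine energy-conserving unitary. Re-inserting $x=e^{-\beta\omega}$ gives $\lambda^{\max}=\frac{1-e^{-\beta\omega d}}{1-e^{-\beta\omega(d+1)}}$; as sanity checks, $\lambda^{\max}\to1$ as $d\to\infty$ (recovering arbitrary thermal operations) and $\lambda^{\max}=0$ when $d=0$.

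The only step that is not bookkeeping is the second one: one must observe that energy conservation decouples the bath into independent resonant two-level exchanges, so the block parameters $t_n$ can be optimized separately, and that the top block at energy $(d+1)\omega$ is non-degenerate, which is precisely what produces the term $e^{-\beta\omega d}$ in the numerator. The bistochasticity within each block, the Gibbs-stochastic form of the qubit marginal and the identification $\lambda=b$, and the summation of the geometric series are all routine.
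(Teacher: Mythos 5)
Your proposal is correct and follows essentially the same route as the paper's proof: exploit the block-diagonal structure of the energy-conserving unitary on the resonant two-dimensional eigenspaces (with the lone top state $|1\rangle_S|d\rangle_E$ responsible for $\lambda^{\max}<1$), reduce the induced qubit map to a Gibbs-stochastic matrix whose parameter is a Gibbs-weighted sum over the block swap probabilities, and maximize that linear expression by full swaps in every block. The only cosmetic difference is that you work directly with the transition probabilities $t_n=|\langle 0,k|U|1,k-1\rangle|^2$, whereas the paper parameterizes each block unitary by angles $\theta_j$ and sets $\sin^2\theta_j=1$, which is the same optimization.
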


 The proof of this theorem can be found in Appendix \ref{Appendix_finite_sized_baths}. One can clearly see from the above Eq. \eqref{FSE_range} that when $d\rightarrow\infty$, then $\lambda^{\text{max}}\rightarrow 1$, which means any thermal operation on a two-level system can be realized with the bath Hamiltonian modeled by a harmonic oscillator. Another interesting limit is when $d\rightarrow 1$, which implies the heat bath is modeled by a two-level system (See Eq. \eqref{truncated_bath}). In that case, we have
 \begin{equation}\label{above47}
     \lambda^{\text{max}} = \frac{1}{1+e^{-\beta\omega}},
 \end{equation}
 and substituting Eq. \eqref{above47} into Eq. \eqref{convex_hull}, we obtain 
 \begin{equation}
     \mathcal{T}(\rho) = \text{Conv}\Big\{ \rho, \frac{1}{1+e^{-\beta\omega}}\begin{pmatrix}
         1 & 0\\
         0 & e^{-\beta\omega}
     \end{pmatrix}\Big\}.
 \end{equation}

This suggests that when a two-level system functions as a heat bath, it restricts thermal operations between an identity map and thermalization. As a result, when we model a two-level system resonant with the working body as a heat bath, the three stroke engine cannot extract work since the heat stroke fails to induce non-passivity in the working body. Observe that Eq. \eqref{Limit_thermal_open_cycle} resembles Eq. \eqref{above47}. This suggests that the process of re-thermalizing the state of the two-level working body for an open-cycle engine (discussed in Sec. \ref{Open_cycle_engine}) is identical to the proposed three-stroke engine that interacts with a cold heat bath modeled using a truncated harmonic oscillator with two levels (this can be observed by substituting $d=1$ in Eq. \eqref{truncated_bath}).

Employing the result from theorem \ref{thm_range_thermal} of Appendix \ref{Appendix_finite_sized_baths}, the values of $\lambda^\textrm{max}$ for the inverse temperatures $\beta_C$ and $\beta_H$ reduce to 
\begin{equation}\label{lambda_finite_d}
    \lambda^{\text{max}}_{C,d} := \frac{1-e^{-\beta_C\omega d}}{1-e^{-\beta_C\omega (d+1)}},\,\,\mbox{and}\,\,\lambda^{\text{max}}_{H,d} := \frac{1-e^{-\beta_H\omega d}}{1-e^{-\beta_H\omega (d+1)}},
\end{equation}
respectively. Substituting these values in Eq. \eqref{Opt_performance_E}, we  obtain 
\begin{equation}
    \eta_d = 1-\frac{(1-e^{-d\beta_C\omega})(1-e^{-\beta_H\omega})(1-e^{-(\beta_C+d\beta_H)\omega})}{(1-e^{-\beta_C\omega})(e^{-\beta_H\omega}-e^{-d\beta_C\omega})(1-e^{-d\beta_H\omega})}.
\end{equation}
Similarly, in this scenario, one can obtain an expression for optimal work production ($W_d$) by substituting the values of $\lambda$s given in Eq. (\ref{lambda_finite_d}) into Eq. (\ref{Opt_performance_W}). We plot the work production and efficiency in Figs. \ref{fig:efficiency_finiteheatbath} and \ref{fig:work_finiteheatbath} for $\omega=0.2/\beta_H$ and some $d$ values, and compared them with $W_{\max}$ and $\eta_{\max}$, respectively. 

\subsubsection{Optimal performance of the engine modelled with Jaynes-Cummings interaction}

In this section, we explore the Jaynes-Cummings model and determine the optimal work and efficiency for the stroke-based discrete engine we considered. Let us recall that the Hamiltonian of the system is $H_S=\omega\ketbra{1}{1}$. The Hamiltonian of the bath is modelled by a single mode harmonic oscillator, i.e.,
\begin{equation}
    H_E = \sum_{n=0}^{\infty}n\omega\ketbra{n}{n},
\end{equation}
where $\ket{n}$ denotes the eigenstate of $H_E$ with eigenvalue $n\omega$. The interaction between the working body and the bath is described by a resonant Jaynes-Cummings Hamiltonian $H_\text{JC}$ in rotating wave approximation as follows:
\begin{equation}
    H_\text{JC} = g(\sigma_+\otimes a+\sigma_-\otimes a^{\dagger}),
\end{equation}
where $a^{\dagger}$ and $a$ represent the creation and annihilation operators on the bath, while $\sigma_{+}=\ketbra{1}{0}$ and $\sigma_- = \ketbra{0}{1}$ represent the excitation and de-excitation operators on the working body, respectively. Clearly $[H_S+H_E,H_\text{JC}]=0$, therefore the transformation
\begin{eqnarray}
    \rho_S &\rightarrow& \Tr_E\left(e^{-it H_\text{JC}}\left(\rho_S\otimes \frac{e^{-\beta_H H_E}}{\Tr\left(e^{-\beta_H H_E}\right)}\right)e^{it H_\text{JC}}\right)\nonumber\\&:=&\E_{H,\text{JC}}\left(\rho_S\right),
\end{eqnarray}
is a thermal operation at the inverse temperature $\beta_H$, which we denoted by $\mathcal{E}_{H,\text{JC}}$. 
In a similar manner, starting from the initial state $V\mathcal{E}_{H,\text{JC}}(\rho_S)V^{\dagger}$, one can describe the thermal operation $\mathcal{E}_{C,\text{JC}}$ in the third stroke at the inverse temperature $\beta_C$ (recall that from Eq. \eqref{third_stroke_eqn}, thermal operation $\E_{C,\text{JC}}$ in the third stroke acts on the state that results after the second stroke, i.e., $V\mathcal{E}_{H,\text{JC}}(\rho_S)V^{\dagger}$). The maximum achievable values of $\lambda^{\text{max}}_{H}$ and $\lambda^{\text{max}}_{C}$ (denoted by $\lambda^{\text{max}}_{H,\text{JC}}$ and $\lambda^{\text{max}}_{C,\text{JC}}$) for the thermal operations $\E_{H,\text{JC}}$ and $\E_{C,\text{JC}}$ in the first and third strokes when it is modelled by Jaynes Cummings interaction are as follows: (see Eq. 10 of Ref. \cite{Lostaglio2018elementarythermal}) 
\begin{widetext}
    \begin{equation}\label{case:1}
    \lambda^{\textrm{max}}_{H,\text{JC}}=\begin{cases}
   \frac{1}{16}\left(8 e^{-\beta_H\omega}-e^{-2\beta_H\omega}+e^{3\beta_H\omega}+8\right), & \text{for } 0 \leq \beta_H\omega \leq \frac{\log 4}{3}, \\
   e^{-4\beta_H\omega}-e^{-3\beta_H\omega}+1 , & \text{for } \beta_H\omega \geq \frac{\log 4}{3},
  \end{cases}
\end{equation}
and 
\begin{equation}\label{case:2}
    \lambda^{\textrm{max}}_{C,\text{JC}}=\begin{cases}
   \frac{1}{16}\left(8 e^{-\beta_C\omega}-e^{-2\beta_C\omega}+e^{3\beta_C\omega}+8\right), & \text{for } 0 \leq \beta_C\omega \leq \frac{\log 4}{3}, \\
   e^{-4\beta_C\omega}-e^{-3\beta_C\omega}+1 , & \text{for } \beta_C\omega \geq \frac{\log 4}{3}.
  \end{cases}
\end{equation}
\end{widetext}

\begin{figure}[t]
   \centering    \includegraphics[width=
    \columnwidth]{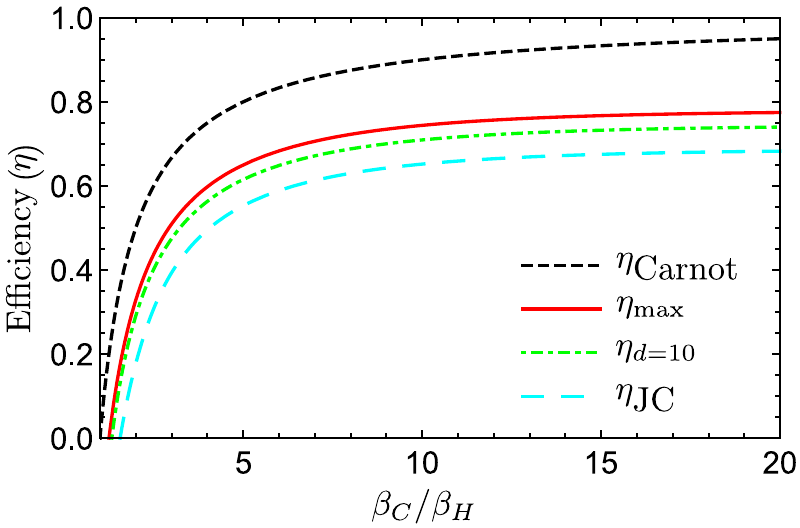}
    \caption{Plot of efficiency ($\eta$) vs $\beta_C/\beta_H$ with $\omega=0.2/\beta_H$. The dashed, solid, dotdashed, and big-dashed curves represent Carnot efficiency, optimal efficiency, efficiency for finite heat bath with $d=10$, and efficiency for Jaynes-Cummings model, respectively.  }
    \label{fig:efficiency_JC}
\end{figure}

\begin{figure}[t]
    \centering    \includegraphics[width=
    \columnwidth]{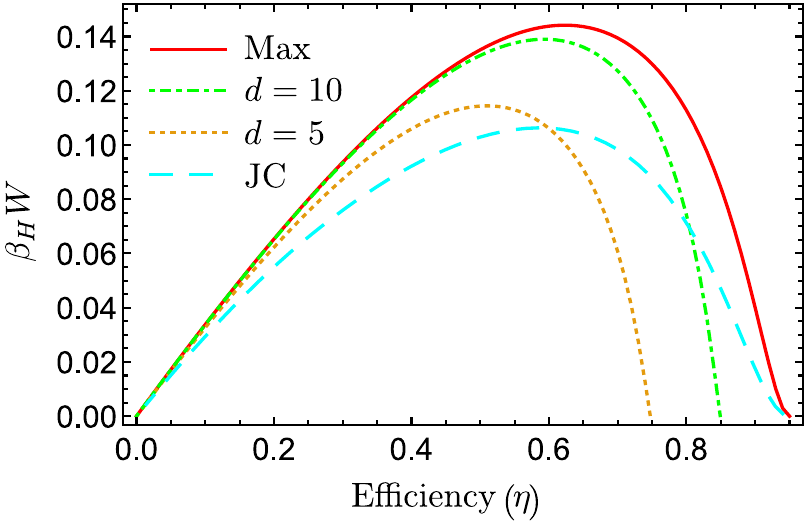}
    \caption{Plot of $\beta_H W$ vs efficiency ($\eta$), where $W$ represents work done by the engine. The solid, dotdashed, dotted, and big-dashed curves represent optimal work, work for finite heat bath with $d=10$, $d=5$, and work for Jaynes-Cummings, respectively.}
    \label{fig:Work_JC}
\end{figure}

Now, substituting  $\lambda^{\textrm{max}}_{H,\text{JC}}$ and $\lambda^{\textrm{max}}_{C,\text{JC}}$ from Eqs. \eqref{case:1} and \eqref{case:2} into Eqs. \eqref{Opt_performance_W} and \eqref{Opt_performance_E}, we can obtain the optimal values of work and efficiency for the engine. The resulting expressions are cumbersome, hence we opted not to put them here; rather, we compare the efficiency obtained in this case with the other cases as shown in Fig. \ref{fig:efficiency_JC}. Furthermore, we also make a comparison of the tradeoff between work and efficiency in Fig. \ref{fig:Work_JC} for various scenarios we have discussed so far.

\section{Conclusion}\label{conclusion}

In this paper, we studied a stroke-based engine in the microscopic regime that requires three strokes to complete the thermodynamic cycles. In particular, we considered the three-stroke engine that involves a two-level system as a working body that steers the energy flow from a hot heat bath to a cold heat bath. The transformation of the working body in the presence of both heat baths is described by thermal operations. We figured out the optimal protocol that leads to the optimal production of work as well as the optimal efficiency when one can perform a restricted class of thermal operations on the working body in the presence of hot and cold heat baths. We analyzed the optimal efficiency and work production by the engine for two such restricted classes: first, when one has access to finite-sized heat baths, and second, when the thermal operations involved in the first and third strokes are realized via the Jaynes-Cummings interaction. Furthermore, we showed that the obtained expressions of optimal work production and efficiency reduce to the expressions of optimal efficiency and work production given in Theorem 3 of Ref. \cite{Lobejko2020} when we relax the restrictions, hence allowing any arbitrary thermal operation on the working body in the presence of hot and cold baths.    

Our research paves the way for numerous exciting research directions. One could explore how the optimal efficiency and maximum work produced by the engine change with the dimension of the working body. Exploring how the engine's optimal efficiency and maximum work production vary with the dimensions of the working body presents an intriguing challenge. The complexity of this problem lies in the intricate structure of the thermal cone, which becomes increasingly convoluted as the dimension of the working body grows. On top of that, the complexity inherent in the Hamiltonian structure of the working body further adds to the intricacy of the problem.

Investigating the significance of coherence in the operation of three-stroke engines presents a captivating avenue for research.  In this paper, we have focused only on the working bodies that do not have any coherence in energy eigenbasis (i.e., in the eigenasis of Hamiltonian $H_S$). Ref. \cite{Lobejko2020} demonstrated that coherence in the state of the working body, modeled as a two-level system, fails to improve the performance of a three-stroke engine.  This raises the question: can this result be generalized for arbitrary dimensional working body? Specifically, it will be interesting to prove or disprove that the coherence present in the state of the working body is not useful to improve the performance of a three-stroke engine. However, characterizing the transformation of coherence in a state under thermal operations remains a formidable challenge, particularly beyond two dimensions. Nonetheless, related results suggest that the dephasing of the state in the local energy eigenbasis is compatible with thermal operations in all dimensions \cite{LostaglioKorzekwaPRX,Lostaglio_2019}. These results might be helpful in this endeavor. 

This work centers on stroke-based heat engines, where work extraction and heat exchange happen in discrete, distinct time intervals. We anticipate a correspondence between the three-stroke discrete heat engine and the continuous heat engine. Notably, analyzing the continuous heat engine enables us to compute the engine's power or rate of work extraction with time, while the proposed three-stroke heat engine yields the average work per cycle. Upon rigorously bridging the gap between these stroke-based discrete and continuous heat engines, an interesting question emerges: what is the minimum time required to complete a thermodynamic cycle? We envision that elucidating the connection between the three-stroke heat engine and the continuous heat engine could offer valuable insights into realizing the three-stroke heat engines experimentally. In a related study, as proposed in Ref. \cite{Yu19}, an experimental implementation of an autonomous refrigerator within the framework of cavity quantum electrodynamics was suggested. Moreover, recent work outlined in Ref. \cite{NYHSG} explores the construction of an autonomous refrigerator using superconducting circuits. Therefore, we believe that designing the three-stroke heat engine is attainable within the frameworks of cavity QED and superconducting circuits.

\section*{Acknowledgement}
TB acknowledges Pawe\l{ }Mazurek for insightful discussions and comments.  The part of the work done at Los Alamos National Laboratory (LANL) was carried out under the auspices of the US DOE and NNSA under contract No.~DEAC52-06NA25396. TB also acknowledges support by the DOE Office of Science, Office of Advanced Scientific Computing Research, Accelerated Research for Quantum Computing program, Fundamental Algorithmic Research for Quantum Computing (FAR-QC) project. TB also acknowledge support from the Foundation for Polish Science through IRAP project co-financed by EU within the Smart Growth Operational Programme (contract no.2018/MAB/5). CD acknowledges support from the German Federal
Ministry of Education and Research (BMBF) within the funding program ``quantum technologies -- from basic research to market'' in the joint project QSolid (grant
number 13N16163).

\appendix
\onecolumngrid
\section{Upper bounding the efficiency of the three stroke engine using Carnot efficiency}\label{Thermo_framework_carnot}
%\subsection{For the three stroke engine} 
\begin{lem}\label{Lemworkheat}
For any discrete stroke-based heat engine, if the work production $W$ is positive, then the amount of heat flowing from the hot heat bath $Q_H$ is also positive, i.e.,
\begin{equation}
    W> 0 \Rightarrow Q_H>0.
\end{equation}
Furthermore, the efficiency of the discrete stroke-based engine will be upper bounded by Carnot efficiency, i.e.,
\begin{equation}
    \eta=\frac{W}{Q_H}\leq 1-\frac{\beta_H}{\beta_C}=\eta_\textnormal{Carnot}.
\end{equation}
\end{lem}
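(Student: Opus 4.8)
The plan is to establish a Clausius-type inequality $\beta_H Q_H + \beta_C Q_C \le 0$ for the cycle and then combine it with the first law $W = Q_H + Q_C$ and the standing assumption $\beta_C > \beta_H$. The key input is that a thermal operation $\E$ at inverse temperature $\beta$ is a CPTP map fixing the Gibbs state $\gamma \propto e^{-\beta H_S}$, so the data-processing inequality gives $S(\E(\rho)\,\|\,\gamma) \le S(\rho\,\|\,\gamma)$ for every state $\rho$. Since $S(\rho\,\|\,\gamma) = -S(\rho) + \beta\,\Tr(H_S\rho) + \ln Z$ with $S(\cdot)$ the von Neumann entropy, this is exactly the statement that $\beta\,\Delta E \le \Delta S$ across any thermal stroke, where $\Delta E$ and $\Delta S$ are the changes of average energy and entropy of the working body during that stroke.

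Applying this to the two thermal strokes of the cycle gives $\beta_H Q_H \le \Delta S_H$ for the heat stroke (whose system energy change is $Q_H$ by Eq. \eqref{Defn:HeatExchanged}) and $\beta_C Q_C \le \Delta S_C$ for the cold stroke (whose system energy change is $Q_C$ by Eq. \eqref{cold_heat}), where $\Delta S_H$ and $\Delta S_C$ are the corresponding von Neumann entropy changes. The intermediate work stroke is unitary and hence entropy-preserving, and the working body is returned to $\rho_S$ at the end of the cycle, so the three entropy changes sum to zero: $\Delta S_H + \Delta S_C = 0$. Adding the two inequalities then yields the Clausius inequality $\beta_H Q_H + \beta_C Q_C \le 0$.

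To finish, I would substitute $Q_C = W - Q_H$ from the first law into the Clausius inequality, which rearranges to $Q_H \ge \frac{\beta_C}{\beta_C - \beta_H}\,W$ using $\beta_C > \beta_H > 0$. Hence $W > 0$ forces $Q_H > 0$, which proves the first assertion; and then dividing $\beta_H Q_H + \beta_C Q_C \le 0$ by $Q_H > 0$ gives $\eta = W/Q_H = 1 + Q_C/Q_H \le 1 - \beta_H/\beta_C = \eta_{\textnormal{Carnot}}$.

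I do not anticipate a genuine obstacle here; the only step demanding care is recognising that the relevant property of thermal operations is the monotonicity of the nonequilibrium free energy (equivalently of the relative entropy to the Gibbs state), and then keeping track of the intermediate von Neumann entropies so that they cancel by cyclicity. Note that the argument uses nothing about the structure of $\rho_S$ (in particular it need not be energy-incoherent). The remainder is elementary algebra, and the same reasoning extends verbatim to a stroke-based engine with any number of alternating hot/cold thermal strokes interleaved with unitary work strokes, with $Q_H$ and $Q_C$ interpreted as the total heats exchanged with the hot and cold baths.
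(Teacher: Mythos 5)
Your proposal is correct and follows essentially the same route as the paper: the data-processing inequality for the relative entropy to the Gibbs state is exactly the free-energy monotonicity $\beta\,\Delta E \le \Delta S$ the paper invokes for each thermal stroke, and combining it with entropy conservation of the unitary work stroke, cyclicity, and the first law yields the same Clausius inequality $\beta_H Q_H + \beta_C Q_C \le 0$ and hence both claims. The only (immaterial) difference is that you extract $Q_H \ge \tfrac{\beta_C}{\beta_C-\beta_H}W$ directly, whereas the paper first uses $Q_H + Q_C > 0$ and then the bound on $Q_C$.
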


\begin{proof}
    From Eq. \eqref{Exp_Work} of section \ref{Characterization}, we know that 
    \begin{equation}
    W = \Tr\Big(H_S\big(\mathcal{E}_{H}(\rho_S)- V (\mathcal{E}_{H}(\rho_S)) V^{\dagger}\big)\Big).
    \end{equation}
 and we rewrite the transfer from the hot and cold baths as Eq. \eqref{Defn:HeatExchanged} and cold heat Eq. \eqref{cold_heat}

\begin{eqnarray}
&& Q_H=\Tr\Big(\big(H_S\big(\mathcal{E}_{H}(\rho_S)-\rho_S\big)\Big)\,\, \mbox{and}\\
&& Q_C = \Tr\Big( H_S\big(\E_C(V\E_H(\rho_S)V^{\dagger})-V\E_H(\rho_S)V^{\dagger}\big)\Big)\nonumber\\
&&\,\,\,\,\,\,\,\,\,\,=\Tr\Big( H_S\big(\rho_S-V\E_H(\rho_S)V^{\dagger}\big)\Big)\label{QC21}
\end{eqnarray}
respectively, where the second equality in Eq. \eqref{QC21} follows from the cyclicity condition, i.e., 
\begin{equation}\label{cyc1}
    \E_C(V\E_H(\rho_S)V^{\dagger}) = \rho_S
\end{equation}
Therefore, we have 
\begin{equation}\label{total_heat1}
 Q_H+Q_C>0.
\end{equation}
Next, using the monotonicity of free energy under thermal operations, we can write the following:
\begin{equation}\label{Eqfreeenergy1}
    F(\rho_S)-F(\mathcal{E}_H(\rho_S))\geq 0, 
\end{equation}
where $F(\cdot)=\Tr(H_S\cdot)-S(\cdot)/\beta_H$ where $S(\cdot)$ denotes the von Neumann entropy. Thus, one can simplify Eq. \eqref{Eqfreeenergy1} as
\begin{eqnarray}
&&\Tr(H_S\rho_S)-\Tr\big(H_S\mathcal{E}_H(\rho_S)\big) -\frac{1}{\beta_H}\big(S(\rho_S)-S(\mathcal{E}_H(\rho_S))\big) \geq 0\nonumber\\&\Rightarrow& -\frac{1}{\beta_H}\big(S(\rho_S)-S(\mathcal{E}_H(\rho_S))\big)\geq \Tr(H_S\rho_S)-\Tr\big(H_S\mathcal{E}_H(\rho_S)\big) = Q_H\nonumber\\
&\Rightarrow&
   -\Big( S(\rho_S)-S(\mathcal{E}_H(\rho_S))\Big)\geq \beta_H Q_H.\label{entropy_hot1}
\end{eqnarray}
Similarly, for the cold heat bath, we find
\begin{eqnarray}\label{entropy_cold1}
   &&-\Big( S(V\mathcal{E}_H(\rho_S)V^{\dagger})-S\big(\mathcal{E}_C(V\mathcal{E}_H(\rho_S)V^{\dagger})\big)\Big)\geq \beta_C Q_C\nonumber\\
   &\Rightarrow&-\Big( S(\mathcal{E}_H(\rho_S))-S(\rho_S)\Big)\geq \beta_C Q_C\label{QCF1},
\end{eqnarray}
where we have use the fact $S(V\mathcal{E}_H(\rho_S)V^{\dagger})=S(\mathcal{E}_H(\rho_S))$ as von-Neumann entropy is invariant under unitary operation, and cyclicity condition given in Eq. \eqref{cyc1} to write Eq. \eqref{QCF1}.
Now, adding Eqs. (\ref{entropy_hot1}) and (\ref{entropy_cold1}), we obtain $\beta_HQ_H+\beta_CQ_C\leq0$, that gives 
\begin{equation}\label{upbound1}
    -\frac{\beta_H}{\beta_C}Q_H\geq Q_C.
\end{equation}
Substituting this upper bound of $Q_C$ given in Eq. \eqref{upbound1} into Eq. \eqref{total_heat1} and noting the fact that $\beta_H<\beta_C$, we arrive at
\begin{eqnarray}
Q_H\Big(1-\frac{\beta_H}{\beta_C}\Big) > 0\quad\Rightarrow \;Q_H>0,
\end{eqnarray}
which completes the proof of the first part. As a consequence, employing Eq. \eqref{upbound1} we obtain
\begin{equation}
    \eta = \frac{W}{Q_H}=\frac{Q_H+Q_C}{Q_H} = 1+\frac{Q_C}{Q_H}\leq 1-\frac{\beta_H}{\beta_C},
\end{equation}
which completes the proof for the second part.
\end{proof}

\section{Thermal operation and $\beta$-ordering}\label{beta_order_section}

We have introduced a thermal operation on a given state $\rho_S$  of dimension $d$
%if it transforms the product state the system and the bath in the thermal equilibrium $\rho\otimes\gamma_E$
as
\begin{equation}\label{eq:TO_Again}
    \E(\rho_S)=\Tr_{E}\left({U\left(\rho_S\otimes\gamma_E\right)U^{\dagger}}\right),
\end{equation}
where $U$ is a joint unitary  operation  acting on the system and a heat bath that commutes with the total Hamiltonian of the system and bath, i.e., $[U, H_S+ H_E] = 0$, and $\gamma_{E}$ is a thermal state of the bath with very large volume at some fixed inverse temperature $\beta$. If $\rho_S$ is an energy-incoherent state, i.e., $[\rho_S,H_S] =0$, one can prove 
\cite{horodecki2013fundamental}
\begin{equation}
    \rho_S\xrightarrow{\E} \E(\rho_S):=\sigma_S\quad\Longrightarrow\quad T^{\E}(\v p) = \v q,\quad T^{\E}(\v \gamma) = \v \gamma,
\end{equation}
where $\v p$ , $\v q$ and $\v \gamma$ are the population vectors in $d$-dimension constructed from $\rho_S$ , $\E(\rho_S)$ and the Gibbs state of the system $\gamma$, respectively. Recall that we defined a population vector for any state of the system $\rho$ as a probability vector with entries $p_i = \bra{E_i}\rho\ket{E_i}$, where $\{\ket{E_i}\}$s correspond to the eigenbasis of the Hamiltonian of the system.

Employing the concepts of relative majorization, one can verify that the existence of such a Gibbs-stochastic matrix $T^{\mathcal{E}}$ is equivalent to the thermomajorization condition $\v p \succ_{\gamma} \v q$. For a detailed review of thermomajorization, we refer to Ref. \cite{Lostaglio_2019, Entropy_sagawa, Marshall1980InequalitiesTO}. Here, we will outline the ideas of the thermonajorization condition using the concept of the thermomajorization curve, which is necessary to analyze the engine's performance.

Let $L_{\v{\gamma}}(\v{p})$ be a piecewise linear curve in $\mathbb{R}^2$ obtained by joining $\Big(Z_S\sum_{i=1}^k \gamma_{\pi(i)}, \sum_{i=1}^k p_{\pi(i)}\Big)$ for $k = \;1, \ldots, d$. Here $\pi$ is the permutation acting on the set of $d$ indices such that 
\begin{eqnarray}
  \frac{p_{\pi(1)}}{\gamma_{\pi(1)}} \geq \frac{p_{\pi(2)}}{\gamma_{\pi(2)}} \ldots \geq\frac{p_{\pi(d)}}{\gamma_{\pi(d)}}, 
\end{eqnarray}
and $Z_S$ is the partition function of the system obtained from the Hamiltonian  $H_S$. 
The curve $L_{\v{\gamma}}(\v{p})$ is referred to as the \emph{thermomajorization curve} of $\v p$ with respect to $\v{\gamma}$, and the ordering obtained by the action of permutation $\pi$ on the set of indices $(12\ldots d)$, i.e.,
\begin{equation}
    \pi(12\ldots d), 
\end{equation}
 is referred to as \emph{$\beta$-order} of $\v p$ with respect to $\v{\gamma}$. We say $\v p$ thermomajorizes $\v q$ and denote it as $\v p\succ_{\gamma}\v q$ if and only if the thermomajorization curve $L_{\v{\gamma}}(\v{p})$ does not lie below the thermomajorization curve $L_{\v\gamma}(\v q)$.  Therefore, the existence of a thermal operation that transforms $\rho$ to $\sigma$ is equivalent to the existence of a Gibbs-stochastic matrix that transforms the population vector $\v p$ associated with $\rho$, to the population vector $\v q$ associated with $\sigma$, and this further reduces to the condition $\v p\succ_{\gamma}\v q$. In the limit $\beta\rightarrow 0$, $\v p\succ_{\gamma}\v q$ boils down to the well-known majorization condition $\v p\succ\v q$.

\section{Proof of theorem \ref{thm_optimal_work_eff}}\label{Proof_of_thm}
We begin by re-writing the population vector associated with the initial state $\rho_S$ as
\begin{equation}
    \v p = \begin{pmatrix}
        p & 1-p
    \end{pmatrix}^T.
\end{equation}
Since we are working with a two-level system as a working body, it can have only two possible $\beta$-order wrt. inverse temperature $\beta_H$: 
\begin{eqnarray}
    &&(12) \quad\text{when}\quad p\geq \frac{1-p}{e^{-\beta_H\omega}}\label{C2},\\
    &&(21) \quad\text{when}\quad p\leq \frac{1-p}{e^{-\beta_H\omega}}.
\end{eqnarray}

If the population vector of the initial state of the working body has $\beta$-order $(21)$, then it has the highest average energy among all states that can be achieved from the initial state $\rho_S$ via thermal operation at inverse temperature $\beta$  (See Lemma 16 of Ref. \cite{Biswas2022extractionof}). Therefore, if the population vector of the initial state of the working body has $\beta$-order $(21)$, then the change in the average energy under any thermal operation $\E_{\beta}(\cdot)$ at inverse temperature $\beta$ will either be zero or negative 
\begin{equation}
    \Tr(H_S\E_{H}(\rho_S))-\Tr(H_S\rho_S)\leq 0.
\end{equation}
Recall that $\Tr(H_S\E_{H}(\rho_S))-\Tr(H_S\rho_S)$ gives an exchanged amount of heat from the heat bath at temperature $\beta_H$ (see Eq. \eqref{Defn:HeatExchanged}), which will be negative if the $\beta$-order of the initial state $\rho_S$ is negative. Employing the lemma \ref{Lemworkheat} further implies that the work done $W$ must be negative, which implies the engine cannot function if the initial state of the working body $\rho_S$ has a $\beta$-order $(21)$.

Therefore, in order to extract work, the initial state of the two-level working body should have $\beta$-order $(12)$ wrt. the hot heat bath, which allows us to write from Eq. \eqref{C2} the following:
\begin{equation}\label{betaorder12}
    p\geq \frac{1-p}{e^{-\beta_H\omega}} \;\;\Rightarrow\;\;e^{-\beta_H\omega} \geq \frac{1-p}{p}.
\end{equation}
Then, we can re-parameterize the state of the two-level working body $\v p$ using the effective temperature $\beta_C^{v}$ defined by the following relation \cite{BrunnerVirtualqubit2012}:
\begin{eqnarray}\label{virtualtemp}
\frac{1-p}{p} = e^{-\beta_C^{v}\omega}.
\end{eqnarray}
Using this parametrization, we can compute $p = \frac{1}{1+e^{-\beta_C^{v}\omega}}$. Furthermore, using Eqs. \eqref{betaorder12}  and \eqref{virtualtemp}, we can easily see $\beta_C^{v}\geq\beta_H$. 

Here, we would like to make a crucial remark. We can imagine any three-stroke engine with a two-level working body and population vector $\v p$ associated with two thermal baths at inverse temperatures $\beta_H$ and $\beta_C$ as an open-cycle engine that is associated with two thermal baths at inverse temperatures $\beta_H$ and $\beta^v_C$ (Recall that an open cycle engine is a three-stroke engine where the final stroke that aims to close the thermodynamic cycle is given by thermalization, as described in Sec. \ref{Open_cycle_engine}. We illustrate this idea in Fig. \ref{fig:Open_vs_closed}.) In other words, any stroke-based discrete heat engine with a two-level working body is essentially an open-cycle heat engine wrt. a new effective inverse temperature of the cold heat bath. 

Now, theorem 6 of Ref. \cite{Biswas2022extractionof} states that the optimal work output and efficiency for an open cycle engine are obtained at the extremal points of the thermal cone $\mathcal{T}_{H}(\v p)$, where the  unitary operation  in the work stroke is provided by 
\begin{equation}\label{NOT_for_work}
    X = \begin{pmatrix}
        0 & 1\\
        1 & 0
    \end{pmatrix},
\end{equation}
where $\T_{H}(\v p)$ is the thermal cone at the inverse temperature $\beta_{H}$ which justifies the subscript (Recall that thermal cone is defined as set of all states that can be achieved from $\rho_S$ via thermal operation at inverse temperature $\beta$, as described in Sec. \ref{Mathematical_tools}). As a result of the correspondence between the three-stroke engine and the open-cycle engine, we do not need to optimize the three-stroke engine's performance over the heat and work strokes.

Now, given a fixed $\lambda^{\text{max}}_H\in [0,1]$, the extremal points of $\mathcal{T}_{H}(\v p)$ are given by 
\begin{equation}\label{extreme_pts_of_TC}
    \Big\{\v p, \lambda_{H}^{\text{max}}\begin{pmatrix} 1-pe^{-\beta_H\omega} \\ pe^{-\beta_H\omega} \end{pmatrix}+(1-\lambda_{H}^{\text{max}})\v p\Big\},
\end{equation}
where the second extremal point can only lead to non-trivial work production and efficiency. Observe that, when $\lambda_{H}^{\text{max}} = 1$, we obtain the full thermal cone as given in Ref. \cite{horodecki2013fundamental}.

Therefore, optimal work production and efficiency have the following forms
\begin{eqnarray}
    W^{*}(p) &=&  \omega  \left(1-2\lambda_{H}^{\text{max}} +2 p (\lambda_{H}^{\text{max}} e^{-\beta_H \omega }+\lambda_{H}^{\text{max}}-1)\right),\label{work_star}\\
\eta^{*}(p) &=& 1-\frac{2p-1-\lambda_H^{\text{max}}\left(pe^{-\beta_H\omega}+p-1\right)}{\lambda_{H}^{\text{max}}(pe^{-\beta_H\omega}+p-1)}\label{eff_star}.
\end{eqnarray}
From Eq. \eqref{work_star}, it is straightforward to see that $W^*$ is an increasing function of $p$. On the other hand, by differentiating $\eta^*$ given in Eq. \eqref{eff_star} wrt. $p$, one can also see that $\eta^{*}(p)$ is an increasing function of $p$:
\begin{equation}
    \frac{\partial\eta^{*}(p)}{\partial p}= \frac{1-e^{-\beta_H\omega}}{\lambda^{\text{max}}_H\big(pe^{-\beta_H\omega}+p-1\big)^2}>0.
\end{equation}
Therefore, here our interest is to find out the highest possible value of $p$ that is compatible with the cyclicity condition. In order to do so, we proceed as follows. 

As the optimal efficiency and work produced by the engine achieved at the extremal point of the set $\T_{H}(\v p)$, thus
from Eq. \eqref{extreme_pts_of_TC}, one can see that the population vector of the $\v{p}$ associated with the initial state transformed to the population vector 
\begin{equation}\label{resulting_1}
    \lambda_{H}^{\text{max}}\begin{pmatrix} 1-pe^{-\beta_H\omega} \\ pe^{-\beta_H\omega} \end{pmatrix}+(1-\lambda_{H}^{\text{max}})\v p = \lambda_{H}^{\text{max}}\begin{pmatrix} 1-pe^{-\beta_H\omega} \\ pe^{-\beta_H\omega} \end{pmatrix}+(1-\lambda_{H}^{\text{max}})\begin{pmatrix} p \\ 1-p \end{pmatrix},
\end{equation}
during the heat stroke. After the work stroke which is governed by the unitary operation $X$ in Eq. \eqref{NOT_for_work}, the resulting population vector from the heat stroke given in Eq. \eqref{resulting_1} transformed to 
\begin{equation}\label{defz}
    \lambda_{H}^{\text{max}}\begin{pmatrix} pe^{-\beta_H\omega} \\ 1-pe^{-\beta_H\omega} \end{pmatrix}+(1-\lambda_{H}^{\text{max}})\begin{pmatrix} 1-p \\ p \end{pmatrix} = \begin{pmatrix} \lambda_{H}^{\text{max}}pe^{-\beta_H\omega}+(1-\lambda^{\text{max}}_H)p \\ \lambda_{H}^{\text{max}}\left(1-pe^{-\beta_H\omega}\right)+(1-\lambda^{\text{max}}_H)(1-p) \end{pmatrix}:=\begin{pmatrix} z \\ 1-z \end{pmatrix} = \v z,
\end{equation}
where $z=\lambda_{H}^{\text{max}}pe^{-\beta_H\omega}+(1-\lambda^{\text{max}}_H)p$.
Now the final stroke which is a thermal operation wrt. a cold bath, should transform the population vector $\v z$ to the initial state which results in closing the cycle. Therefore,
\begin{equation}\label{defz2}
    \lambda_C\begin{pmatrix} 1-ze^{-\beta_C\omega} \\ ze^{-\beta_C\omega} \end{pmatrix}+(1-\lambda_C)\begin{pmatrix} z \\ 1-z \end{pmatrix} = \begin{pmatrix} p \\ 1-p \end{pmatrix}.
\end{equation}
By substituting $z$ from Eq. \eqref{defz} in Eq. \eqref{defz2} and solving the resulting equation for $p$, we obtain 
\begin{equation}\label{p_intermediate}
    p = \frac{1-\lambda^{\text{max}}_H(1-\lambda_C)-\lambda_C(1-\lambda^{\text{max}}_H)e^{-\beta_C\omega}}{2-\lambda_C(1-\lambda_H^{\text{max}})-\lambda_H^{\text{max}}-\lambda_C(1-\lambda_H^{\text{max}})e^{-\beta_C\omega}-\lambda_H^{\text{max}}(1-\lambda_C)e^{-\beta_H\omega}+\lambda_H^{\text{max}}\lambda_Ce^{-(\beta_C+\beta_H)\omega}}.
\end{equation}
We can see that $p$ is an increasing function of $\lambda_C$ by differentiating $p$ wrt. $\lambda_C$ as
\begin{equation}\label{above}
    \frac{\partial p}{\partial \lambda_C} =  \frac{e^{\omega  (\beta_C+\beta_H)} \Big(e^{\omega  (\beta_C+\beta_H)}-\big(\lambda^{\text{max}}_H e^{\beta_C \omega }+(1-\lambda^{\text{max}}_H) e^{\beta_H\omega}\big)\Big)}{\bigg(\Big(2-\lambda^{\text{max}}_H-(1-\lambda_H^{\text{max}})\lambda_C\Big)e^{\omega(\beta_C+\beta_H)}-(1-\lambda_C)\lambda_H^{\text{max}}e^{\beta_C\omega}-(1-\lambda_H^{\text{max}})\lambda_Ce^{\beta_H\omega}+\lambda_C\lambda_H^{\text{max}}\bigg)^2} > 0.
\end{equation}
The inequality in the above Eq. \eqref{above} follows because $e^{\beta_C\omega}>1$ and $e^{\beta_H\omega}>1$, therefore, the second term in the numerator appeared on RHS of Eq. \eqref{above} is positive, i.e.,
\begin{equation}
    e^{\omega  (\beta_C+\beta_H)}-\big(\lambda^{\text{max}}_H e^{\beta_C \omega }+(1-\lambda^{\text{max}}_H) e^{\beta_H\omega}\big)>0,
\end{equation}
which implies that the numerator is always positive, whereas the denominator is always positive because of the square. Therefore, the largest possible value of $p$ compatible with the cyclicity condition is obtained when $\lambda_C =\lambda_C^{\text{max}}$, i.e.,
\begin{equation}\label{cyclicityp}
    p = \frac{1-\lambda^{\text{max}}_H(1-\lambda_C^{\text{max}})-\lambda_C^{\text{max}}(1-\lambda^{\text{max}}_H)e^{-\beta_C\omega}}{2-\lambda_C^{\text{max}}(1-\lambda_H^{\text{max}})-\lambda_H^{\text{max}}-\lambda_C^{\text{max}}(1-\lambda_H^{\text{max}})e^{-\beta_C\omega}-\lambda_H^{\text{max}}(1-\lambda_C^{\text{max}})e^{-\beta_H\omega}+\lambda_H^{\text{max}}\lambda_C^{\text{max}}e^{-(\beta_C+\beta_H)\omega}}.
\end{equation}
Substituting the value of $p$ from Eq. \eqref{cyclicityp} in Eqs. \eqref{work_star} and \eqref{eff_star}, we get 
\begin{eqnarray}
W_{\text{max}} &=& 1-2\lambda^{\text{max}}_H+\frac{2\Big(\lambda_He^{-\beta_H\omega}-(1-\lambda_H^{\text{max}})\Big)\Big(1-(1-\lambda_C^{\text{max}})\lambda_H^{\text{max}}-\lambda_C^{\text{max}}e^{-\beta_C\omega}(1-\lambda_{H}^{\text{max}})\Big)}{2-\lambda_C^{\text{max}}(1-\lambda_H^{\text{max}})-\lambda_H^{\text{max}}-\lambda_C^{\text{max}}e^{-\beta_C\omega}(1-\lambda_H^{\text{max}})-\lambda_H^{\text{max}}e^{-\beta_H\omega}(1-\lambda_C^{\text{max}})+\lambda_{C}^{\text{max}}\lambda_{H}^{\text{max}}e^{-(\beta_C+\beta_H)\omega}},\nonumber\\
\eta_{\text{max}} &=& 1-\frac{\lambda_C^{\text{max}}\Big(1-\lambda_H^{\text{max}}e^{-\beta_H\omega}-(1-\lambda_H^{\text{max}})e^{-\beta_C\omega}\Big)}{\lambda^{\text{max}}_H\Big(e^{-\beta_H\omega}-(1-\lambda^{\text{max}}_C)-\lambda_C^{\text{max}}e^{-(\beta_H+\beta_C)\omega)}\Big)}.
\end{eqnarray}
This completes the proof of theorem \ref{thm_optimal_work_eff}.

\section{Range of thermal operation with finite sized baths}\label{Appendix_finite_sized_baths}

In this section, we describe the range of thermal operations for a qubit system with a finite size heat bath. The thermal operations on a qubit system can be expressed as a convex mixture of two extremal thermal processes $\lambda T^{\E}+(1-\lambda)\iden$, where $0\leq \lambda \leq 1$ and
\begin{equation}
    \mathcal{E}=\begin{pmatrix}
    1-e^{-\beta\omega} & 1\\
    e^{-\beta\omega} & 0
    \end{pmatrix}.
\end{equation} 
However, when the bath is of finite dimension the full range of $\lambda$ is not allowed. Let's consider the dimension of the heat bath is $d_E=d+1$. Tha Hamiltonians of the system and heat bath are
\begin{eqnarray}
H_S=\omega\ket{1}\!\bra{1}\,\, \mbox{and}\,\,
H_E=\omega\sum_{n=0}^d n \ket{n}\!\bra{n}
\end{eqnarray}
respectively. Note that we assume that the ground state energies of the system and heat bath are zero, and the energy levels are eqispaced. With this we can state the following theorem:
\begin{thm}\label{thm_range_thermal}
The allowed range of thermal operation for a qubit system when the heat bath is finite, is
$\lambda\in\{0,\lambda^{\max}\}$, where
\begin{equation}
    \lambda^{\max}=\frac{1-e^{-\beta\omega d}}{1-e^{-\beta\omega(d+1)}}.
\end{equation} 
\end{thm}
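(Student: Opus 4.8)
The plan is to directly compute the extremal thermal operation on the qubit that maps the excited-state population to the smallest possible value (equivalently, that achieves the largest $\lambda$ in the parametrization of Eq. \eqref{Range_of_lambda}), using the thermomajorization/Gibbs-stochasticity machinery from Appendix \ref{beta_order_section}. Since the global unitary $U$ commutes with $H_S+H_E$, it acts blockwise on the energy eigenspaces of the joint system. Label the joint energy levels by $(s,n)$ with $s\in\{0,1\}$ the system level and $n\in\{0,\dots,d\}$ the bath level; the total energy is $\omega(s+n)$. The degenerate subspace of total energy $\omega k$ for $1\le k\le d$ is two-dimensional, spanned by $\ket{0,k}$ and $\ket{1,k-1}$, while the energies $\omega\cdot 0$ and $\omega(d+1)$ are one-dimensional. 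A permutation-type $U$ that, within each two-dimensional block, swaps $\ket{0,k}\leftrightarrow\ket{1,k-1}$ is the natural candidate for the extremal operation that maximally excites the system; I would first argue (invoking the polytope structure of thermal processes and the two-dimensional extremal thermal process recalled after Eq. \eqref{convex_hull}) that the extremal Gibbs-stochastic matrix of interest corresponds to exactly this block-swap.

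Next I would carry out the bookkeeping. Starting from $\rho\otimes\gamma_E$ with $\rho = \mathrm{diag}(p,1-p)$ and $\gamma_E$ the Gibbs state of $H_E$ with partition function $Z_E=\sum_{n=0}^d e^{-\beta\omega n} = \frac{1-e^{-\beta\omega(d+1)}}{1-e^{-\beta\omega}}$, I compute the marginal on $S$ after applying the block-swap $U$ and tracing out $E$. The new excited-state population $p'$ collects: the contributions $\ket{1,n}$ that were \emph{not} moved (those with $n=d$, since $\ket{1,d}$ has no partner), plus the contributions from $\ket{0,k}\to\ket{1,k-1}$ for $k=1,\dots,d$. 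Explicitly, $p' = \frac{1}{Z_E}\big( (1-p) e^{-\beta\omega d} + p\sum_{k=1}^{d} e^{-\beta\omega k}\big)$ — wait, I must be careful with which amplitudes sit where; the correct split is $p' = \frac{p\,e^{-\beta\omega d}}{Z_E} + (1-p)\sum_{k=1}^{d}\frac{e^{-\beta\omega(k-1)}}{Z_E}\cdot(\text{the swapped weight})$, and I would reconcile this with the target form. Comparing the resulting $\mathcal{E}(\rho)$ to the convex combination $\lambda^{\max} T^{\mathcal E}\v p + (1-\lambda^{\max})\v p$ with $T^{\mathcal E}$ the full extremal qubit process from Eq. \eqref{convex_hull}, the coefficient must satisfy $\lambda^{\max} = 1/Z_E \cdot (\text{geometric-sum factor})$, which after simplification gives $\lambda^{\max} = \frac{1-e^{-\beta\omega d}}{1-e^{-\beta\omega(d+1)}}$ as claimed. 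I would also verify the two limiting sanity checks already quoted in the text: $d\to\infty$ gives $\lambda^{\max}\to 1$, and $d=1$ gives $\lambda^{\max}=1/(1+e^{-\beta\omega})$.

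The remaining — and genuinely main — obstacle is the \emph{optimality} claim: that no thermal operation (no admissible $U$, no larger bath structure within the fixed $(d{+}1)$-dimensional truncated oscillator) can push the excited-state population below the value produced by the block-swap, i.e.\ that $\lambda^{\max}$ above is truly the supremum and not merely attained by one convenient choice. I would handle this via thermomajorization: among all states Gibbs-stochastically reachable from $\v p \otimes \v\gamma_E$ on $S$ alone, the reachable set on $S$ is a polytope whose relevant vertex is characterized by the thermomajorization curve, and I would show the block-swap output sits at that vertex — equivalently, that any attempt to move more population out of the ground state of $S$ would violate $L_{\v\gamma}(\,\cdot\,)$ monotonicity because the bath simply does not have enough "cold" weight at high energies (this is exactly where the finite truncation bites, and where $e^{-\beta\omega d}$ in the numerator comes from). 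Making this vertex-identification rigorous, rather than just exhibiting an achievable point, is the crux; everything else is the geometric-series computation sketched above.
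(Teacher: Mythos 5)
Your achievability computation is essentially fine — in fact your \emph{first} expression for the excited-state population after the block swap, $p' = \frac{1}{Z_E}\big((1-p)e^{-\beta\omega d}+p\sum_{k=1}^{d}e^{-\beta\omega k}\big)$, is already correct and reproduces $\lambda^{\max}$ exactly; the "corrected split" you then wrote down is the garbled one. The genuine gap is in the optimality step, which you rightly identify as the crux but propose to close with a tool that cannot do the job. Thermomajorization, i.e.\ Gibbs-stochastic reachability applied to the joint population vector $\v p\otimes\v\gamma_E$, characterizes transitions under \emph{thermal operations} on $S\otimes E$, which implicitly grant access to additional, arbitrarily large baths; it is only a necessary condition on what a single energy-conserving unitary acting on $S$ and the fixed $(d{+}1)$-level oscillator can do. Concretely, $G_S\otimes\mathrm{id}_E$ with $G_S$ any qubit Gibbs-stochastic matrix preserves the joint Gibbs distribution, so the set of $S$-marginals reachable "Gibbs-stochastically" from $\v p\otimes\v\gamma_E$ is the \emph{entire} thermal polytope of $\v p$; the vertex your argument would identify corresponds to $\lambda=1$, not to $\lambda^{\max}<1$. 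The finite truncation does not bite through any monotonicity of the curve $L_{\v\gamma}(\cdot)$ — it bites only through the restriction that the dynamics is one unitary on $S\otimes E$, which is precisely the structure your proposed bound discards.

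What closes the gap (and is how the paper argues) is an exhaustive parametrization of the admissible unitaries rather than a majorization bound: since the resonant qubit–truncated-oscillator spectrum has all degenerate joint eigenspaces of dimension at most two, every $U$ with $[U,H_S+H_E]=0$ decomposes as in Eq.~\eqref{unitary_representation} into arbitrary $2\times 2$ unitaries $V_j$ on the subspaces $\{\ket{0}_S\ket{j}_E,\ket{1}_S\ket{j-1}_E\}$, plus trivial action on the two edge levels. Redoing your bookkeeping for general $V_j$ (mixing angles $\theta_j$; the phases drop out of the populations) shows the induced map on the qubit is the convex combination of Eq.~\eqref{Range_of_lambda} with $\lambda=\frac{1}{Z_E}\sum_{j=1}^{d}\sin^2\theta_j\,e^{-\beta\omega(j-1)}$. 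Each term is maximized independently at $\sin^2\theta_j=1$ — your block swap — so $\lambda^{\max}=\big(1-e^{-\beta\omega d}\big)/\big(1-e^{-\beta\omega(d+1)}\big)$ is both attained and an upper bound, with optimality following at once from the parametrization and no vertex-identification argument needed.
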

\begin{proof}
 As $[U, H_S\otimes \iden_E+ \iden_S\otimes H_E] = 0$, then for a thermal operation with such a heat bath, $U$ has the following $2\times 2$ block-diagonal structure in the energy eigenbasis:
\begin{equation}\label{unitary_representation}
   U=\iden_{\{\ket{0}_S\ket{0}_{E}\}}\bigoplus_{i=1}^d V_{i_{\{\ket{0}_S\ket{i}_{E},\ket{1}_S\ket{i-1}_{E}\}}}\oplus\iden_{\{\ket{1}_S\ket{d}_{E}\}},
\end{equation}
where $V_i$'s are $2\times2$ unitary matrix in the sub-space $\{\ket{0}_S\ket{i}_{E},\ket{1}_S\ket{i-1}_{E}\}$ and $\iden$'s are basically 1. Further, we take initial system state $\rho_S$ to be diagonal in the energy eigenbasis, i.e.,
\begin{equation}\label{initial_system}
    \rho_S=
    \begin{pmatrix}
    p & 0 \\
    0 & 1-p
    \end{pmatrix}.
\end{equation}
The state of the bath is
\begin{equation}
    \rho_{E}=\frac{e^{-\beta H_{E}}}{Z_{H_{E}}}=\frac{1}{Z_{H_E}}
    \begin{pmatrix}
    1 & 0 & \cdots & 0 \\
    0 & e^{-\beta\omega} & \cdots & 0\\
    \vdots & \vdots & \vdots & \vdots \\
    0 & 0 & \cdots & e^{-d\beta\omega}
    \end{pmatrix},
\end{equation}
where $Z_{H_E}=\mbox{Tr}\left(e^{-\beta H_E}\right)$. Hence, the initial state of the system and bath has the following form in the energy eigenbasis (see equation Eq. (\ref{unitary_representation})):
\begin{equation}
    \rho_S\otimes\rho_E=\frac{1}{Z_{H_E}}=
    p \bigoplus_{i=1}^d 
    \begin{pmatrix}
    p e^{-i\beta\omega} & 0\\
    0 & (1-p)e^{-(i-1)\beta\omega}
    \end{pmatrix}
    \oplus (1-p)e^{-d \beta \omega}.
\end{equation}
Note that the energies of the blocks from left to right are $0, \omega, 2 \omega, \ldots, (d+1)\omega$. As the unitary operation $\left(U\right)$ and the initial state of the system and bath have similar block diagonal structures, $U\left(\rho_S\otimes\rho_E\right)U^{\dagger}$ also has a similar block diagonal structure. Therefore, it is enough to find out the elements of one block. Consider a block position $j$ with energy $j\omega$, then we have
\begin{eqnarray}
   \rho_{SE}&=& V_j (\rho_S\otimes\rho_E) V_j^\dagger=
    \begin{pmatrix}
    e^{i\phi_j}\cos\theta_j & e^{i\alpha_j}\sin\theta_j\\
    -e^{-i\alpha_j}\sin\theta_j &
    e^{-i\phi_j}\cos\theta_j
    \end{pmatrix}
    \begin{pmatrix}
    p e^{-j\beta\omega} & 0 \\
    0 & (1-p)e^{-(j-1)\beta\omega}
    \end{pmatrix}
    \begin{pmatrix}
    e^{i\phi_j}\cos\theta_j & -e^{i\alpha_j}\sin\theta_j\\
    e^{-i\alpha_j}\sin\theta_j &
    e^{i\phi_j}\cos\theta_j
    \end{pmatrix}\\
    &=&
    \begin{pmatrix}
    p e^{-j\beta\omega}\cos^2\theta_j+(1-p)e^{-(j-1)\beta\omega}\sin^2\theta_j & e^{i(\phi_j+\alpha_j)}\left(
    e^{-(j-1)\beta\omega}(1-p)-pe^{-j\beta\omega}\right)\sin\theta_j\cos\theta_j \\
    e^{-i(\phi_j+\alpha_j)}\left(
    e^{-(j-1)\beta\omega}(1-p)-pe^{-j\beta\omega}\right)\sin\theta_j\cos\theta_j &  p e^{-j\beta\omega}\sin^2\theta_j+(1-p)e^{-(j-1)\beta\omega}\cos^2\theta_j
    \end{pmatrix},\nonumber
\end{eqnarray}
where $V_j$ is an arbitrary $2\times2$ special unitary matrix. After tracing out the thermal bath from the final state $\rho_{SE}$, we get the final state of the system as follows
\begin{equation}
    \rho'_S=\frac{1}{Z_{H_E}}
    \begin{pmatrix}
    \mathcal{A} & 0\\
    0 & Z_{H_E}-\mathcal{A}
    \end{pmatrix},
\end{equation}
where $\mathcal{A}=p+\sum_{j=1}^d \left(p e^{-j\beta\omega}\cos^2\theta_j+(1-p)e^{-(j-1)\beta\omega}\sin^2\theta_j\right)$.

On the other hand, under thermal operations as described above theorem \ref{thm_range_thermal}, one of the diagonal elements of the state $\rho_S$ changes to
\begin{eqnarray}
p \rightarrow p'= 1-\left((1-p)(1-\lambda)+e^{-\beta\omega}p\lambda\right). 
\end{eqnarray}
To find the solution of $\lambda$, we need to compare $p'$ with $\mathcal{A}/Z_{H_E}$, i.e.,
\begin{equation}
    \lambda = \frac{1}{Z_H}\sum_{j=1}^d \sin^2\theta_j e^{-\beta(j-1)\omega}. 
\end{equation}
The maximum value of $\lambda$ is obtained when $\sin^2\theta_j=1$ $\forall \theta_j$ and the value is
\begin{equation}
  \lambda^{\max}= \sum_{j=1}^d\frac{e^{-\beta(j-1)\omega}}{Z_{H}} = \frac{1-e^{-\beta\omega d}}{1-e^{-\beta\omega(d+1)}}.
\end{equation}
This completes the proof.
\end{proof}
Therefore, the range of $\lambda$ is $\{0,\lambda^{\max}\}$. It is easy to check that for $d\rightarrow\infty$, $\lambda^{\max}=1$, and thus all thermal operations are possible. If the heat bath is a qubit, then $\lambda^{\max}=1/\left(1+e^{\beta\omega}\right)$, which implies that the thermal operation corresponds to $\lambda^{\max}$ just thermalizes any input system. 

\twocolumngrid
\bibliography{main}

\end{document}